\relax
\documentclass[letterpaper]{article} 
\usepackage{aaai22}  
\usepackage{times}  
\usepackage{helvet}  
\usepackage{courier}  
\usepackage[hyphens]{url}  
\urlstyle{rm} 
\usepackage{natbib}  
\usepackage{caption} 
\DeclareCaptionStyle{ruled}{labelfont=normalfont,labelsep=colon,strut=off} 
\frenchspacing  
\setlength{\pdfpagewidth}{8.5in}  
\setlength{\pdfpageheight}{11in}  
%
\usepackage{algorithm}
\usepackage{amsmath, bm, amsfonts}
\usepackage{layouts}
\usepackage{lipsum}

%
\usepackage{newfloat}
\usepackage{listings}
\lstset{%
	basicstyle={\footnotesize\ttfamily},
	numbers=left,numberstyle=\footnotesize,xleftmargin=2em,
	aboveskip=0pt,belowskip=0pt,%
	showstringspaces=false,tabsize=2,breaklines=true}
\floatstyle{ruled}
\newfloat{listing}{tb}{lst}{}
\floatname{listing}{Listing}

\nocopyright
%
\pdfinfo{
/Title(Separable Shape Tensors for Aerodynamic Design)
/Author(Zachary J. Grey, Olga A. Doronina, Andrew Glaws)
/TemplateVersion (2022.1)
}

\usepackage{graphicx}
\usepackage{siunitx}
\usepackage[font=small]{caption} 
\usepackage{amsthm}
\usepackage{layouts}

\newtheorem{prop}{Proposition}
\newtheorem{lemma}{Lemma}

\usepackage{lipsum}
\usepackage{xcolor}
\usepackage{subcaption}
\usepackage{algpseudocode}
\usepackage{tikz}
\usepackage{xcolor}
\interfootnotelinepenalty=10000
\definecolor{green}{rgb}{0.35,0.8,0.0}
\definecolor{blue}{rgb}{0, 0.4470, 0.7410}
\usetikzlibrary{matrix, positioning, shapes,decorations.pathreplacing,fit,backgrounds}
\title{Separable Shape Tensors for Aerodynamic Design}

\author {
    Zachary J. Grey,\textsuperscript{\rm 1}
    Olga A. Doronina, \textsuperscript{\rm 2}
    Andrew Glaws \textsuperscript{\rm 2}
}
\affiliations {
    \textsuperscript{\rm 1} National Institute of Standards and Technology, Boulder, CO, USA\\
    \textsuperscript{\rm 2} National Renewable Energy Laboratory, Golden, CO, USA\\
    zachary.grey@nist.gov, olga.doronina@nrel.gov, andrew.glaws@nrel.gov
    }

\begin{document}
\maketitle
\begin{abstract}
Airfoil shape design is a classical problem in engineering and manufacturing. In this work, we combine principled physics-based considerations for the shape design problem with modern computational techniques using a data-driven approach. Modern and traditional analyses of 2D and 3D aerodynamic shapes reveal a flow-based sensitivity to specific deformations that can be represented generally by affine transformations (rotation, scaling, shearing, translation). We present a novel representation of shapes that decouples affine-style deformations over a submanifold and a product submanifold principally of the Grassmannian. As an analytic generative model, the separable representation, informed by a database of physically relevant airfoils, offers (i) a rich set of novel 2D airfoil deformations not previously captured in the data, (ii) an improved low-dimensional parameter domain for inferential statistics informing design/manufacturing, and (iii) consistent 3D blade representation and perturbation over a sequence of nominal 2D shapes.
\end{abstract}
\section{Introduction}
We begin by reviewing aspects of airfoil and blade/wing design to establish a motivation for the work. What follows is intended to serve as a detailed overview of the theoretical foundations for computations. Implementations and examples are available on GitHub~\cite{doecode_73484}.

Many artificial intelligence (AI)-aided design and manufacturing algorithms rely on shape representation methods to manipulate shapes in order to study sensitivities, approximate inverse problems, and inform optimizations. Two-dimensional cross sections of aerodynamic structures such as aircraft wings or wind turbine blades, also known as \emph{airfoils}, are critical engineering shapes whose design and manufacturing can have significant impacts on the aerospace and energy industries. Research into AI and machine learning (ML) algorithms involving airfoil design for improved aerodynamic, structural, and acoustic performance is a rapidly growing area of work~\cite{seshadri2018turbomachinery, Zhang:2018,Li:2019,Chen:2019,Glaws:2022a,Glaws:2022b,Wang:2022,Yonekura:2021,Yang:2022}.

Although airfoil shapes have been studied extensively and can appear relatively benign, their representation and design are complex due to their extreme operating conditions and the highly sensitive relationship between shape deformations and changes in aerodynamic performance. In this context, innovations specifically related to computational domains are of paramount importance for the future of computational fluid dynamics~\cite{slotnick2014cfd}. Improved shape parameter domains will enable future parametrized model reductions~\cite{willcox2002balanced, benner2015survey} to balance computational costs, improve designs, and make computations more explainable and interpretable. 

In this work, we explore a data-driven approach that uses a matrix ($2$-tensor) manifold framework to parametrize (or learn) a manifold of airfoil shapes. The resulting set of deformations to airfoil shapes separates important, and often constrained, affine deformations. Modern airfoil design incorporates constrained design characteristics of twist (i.e., angle of attack) and scale, which must be fixed or treated independently of higher-order deformations to the shape. Our approach decouples these two aspects of airfoil design and offers new interpretations of a space of shapes not previously considered---that is, ``learning'' a manifold of discrete shapes as submanifolds built from parent matrix manifolds. In the following subsections, we provide a brief overview of the airfoil representation scheme and demonstrate its flexibility over current methods, including the capability to extend from two-dimensional (2D) airfoils to full three-dimensional (3D) shapes, such as wind turbine blades. The results are predicated on parametrizations over well-understood manifolds, offering an \emph{analytic generative model} for airfoil shapes, in contrast to alternative AI-based generative models and other nonlinear dimension reductions (i.e., manifold learning). Implementations and examples are available~\cite{doecode_73484}.

\subsection{Defining an Airfoil}
We review general concepts for airfoil design, highlighting the fact that airfoil shapes are defined independent of planar rotations and scaling despite these deformations being highly sensitive to aerodynamic quantities of interest. This presents a challenge to define a general domain of airfoil shapes which is also independent of a chosen basis expansion representation.

Airfoil design seeks an optimal planar shape that satisfies desired design criteria---e.g., a specific lift and drag under particular operating and atmospheric conditions. Quantitatively, a given airfoil is typically characterized by its aerodynamic properties using lift and drag profiles, or ``polars,'' which are defined as univariate functions of these quantities over a range of planar rotations (or angles of attack). These profiles characterize important operational behavior and sensitivities over a continuous set of planar rotations. As such, functionals of these univariate profiles are often used to inform quantities of interest for optimization, approximation, and inverse problems---e.g., $\int(\zeta \circ f)(\theta)d\theta$, where $f$ is lift, drag, or the ratio of lift to drag as a function of scalar rotation angle $\theta$; and $\zeta$ acts as a transformation representing the desired characteristics of the polar. Specifying functionals over a planar rotation of the shape to characterize the airfoil operational profile suggests that a given airfoil is best characterized independently of these rotations. The rigid motion is ``integrated away'' by the definition of operational effectiveness as a functional over the angle of attack.

To study shapes, one approach is to represent airfoils by the class-shape transformation (CST) ~\cite{kulfan2008universal}, which partitions the shape into upper (suction side) and lower (pressure side) surfaces of the airfoil. The upper and lower surfaces are parametrized by the coefficients of two truncated Bernstein polynomial series. The two distinct upper and lower surface polynomials are multiplied by rational ``class functions'' that ensure at least two roots at zero and one where the distinct curves (as graphs of the two polynomials) connect. Consequently, deformations applied by modifying the coefficients of the expansion retain \emph{fixed orientation} by design, such that the leading edge point (root at zero) and trailing edge point (root at one) of the polynomial expansions are fixed. 

Alternatively, the airfoil can be parametrized using planar Bezier splines, or general B-splines, with specified control points~\cite{Hosseini2016}. Additionally, shapes can be inferred from noisy data with weighted local least squares~\cite{Ghorbani2021}. The curve is typically parametrized with complicated (and often subjective) constraints on each control point to help regularize deformations to achieve certain behavior. Previous studies have used splines to design several families of airfoils for modern megawatt-scale wind turbines~\cite{Li:2019}. These representations are typically fixed to ignore rotations of the shape and scaled to obtain a \emph{unit-chord length} such that the Euclidean distance from the \emph{leading edge} to the \emph{trailing edge} is one.

However, the discussed expansion representations (and others) are often \emph{coupled with a highly sensitive linear scaling of the shape} as inferred from the physics and modeling~\cite{Grey2017, Glaws:2022a, Li:2019, seshadri2018turbomachinery}. Specifically, general representations of shapes as curves remain coupled to large-scale, affine-type deformations---deformations resulting in significant and relatively well-understood physical impacts on aerodynamic performance like changes in thickness and camber. In contrast, smaller-scale undulating perturbations are of increasing interest to airfoil design problems~\cite{Glaws:2022a} and to the study of impacts of manufacturing defects and damage~\cite{Ge:2019}. This coupling between physically meaningful affine deformations and higher-order perturbations in shapes \emph{confounds deformations of interest} in the design process. Further details and examples of this coupling between deformations of interest and affine deformations are presented in Section~\ref{subsec:affine}.

Moreover, defining appropriate domains and constraints informing meaningful design spaces for the coefficients of Bernstein polynomials or B-splines can be very challenging from one problem to the next. Defaulting to bounded ranges about nominal coefficients may not be expressive enough to cover more diverse classes of airfoils. Restrictive and complicated choices of parameter domains make it challenging to diversify designs and take full advantage of the flexibility offered by AI-aided design and manufacturing. That is, interpreting the dependencies between perturbations to CST or spline parameters and subsequent changes to the resulting shapes can be challenging. This makes it difficult to select informed prior distributions generating ``reasonable'' aerodynamic shapes that underpin fundamental AI and inverse problem computations. For example, assumed uniform and Gaussian distributions over shape parameters often translate to complex, nonintuitive distributions over any space of shapes---distributions which may be multi-modal and consisting of several disjoint clusters of shapes.

Constraints imposed by the chosen representation, e.g., CST or splines, tend to fix the orientation of the airfoil in the plane. The intuition is that airfoil shapes are \emph{characterized by perturbations that are independent of rotation}, given design functionals defined over profiles of rotation angle. The desired 2D rotational invariance and physics-based interpretations motivate our development of a novel separable representation of shapes---independent of a chosen basis expansion representation---for next-generation aerodynamic design.

\subsection{Defining a Blade}
We describe challenges associated with defining a 3D blade/wing design by interpolating a sequence of 2D shapes. The 2D cross sections are designed and constrained by affine deformations encoding scale, rotation, and position properties that are often fixed by structural constraints or legal regulations. We establish a clear need for a general framework which can accomplish interpolation of 2D shapes independent of prescribed affine deformations. Additionally, new representations should be free from specific 2D basis expansion representations since total parameter count scales poorly for 3D design.

As a natural extension of 2D airfoils, 3D aerodynamic design considers the construction of a blade or wing from landmark airfoils with specific operational characteristics along the length of the blade. Blades and wings are generally represented by 2D cross sections extruded along a spanwise axis in 3D---e.g., wind turbine blades or gas turbine blades~\cite{Hosseini2016}. In contrast to the 2D airfoil design problem, the relative orientation, scaling, and position of the airfoil shapes along the span is heavily coupled to the final design of the full 3D shape---which has structural and/or regulatory implications in the specific case of wind turbines. For example, the 2D airfoils are often scaled to achieve an appropriate Reynolds number within the 3D blade, per modeled flow conditions. These shapes are then carefully rotated and translated smoothly along the span axis---imparting a twist and bend to the blade---to achieve specific operational characteristics from the hub to the tip of the 3D design. This procedure amounts to interpolating a sequence of 2D airfoils to define the 3D blade. 

The proposed design procedure is challenging with existing airfoil parametrizations as the total number of parameters defining the blade scales with the number of 2D cross sections and parameters may change dramatically from one airfoil cross section to the next. \emph{It is not clear how conventional, potentially high dimensional, parametrizations can be interpolated from one 2D shape to the next to retain the sought affine characteristics of the blade}.

The methods presented here transform existing airfoil definitions in order to (i) inform rotation- and reflection-invariant designs of 2D airfoil shapes and (ii) extend 2D designs to 3D blades by enabling spanwise airfoil interpolations that decouple blade-shape perturbations from specified scalings and rotations. The key characteristic in both the 2D and 3D design tasks is that we seek \emph{separability} between airfoil deformations that scale and rotate the shape---defined by affine deformations---and those that introduce local, high-order undulations in the surface. Introducing the separability then offers designers the ability to independently select the types of deformations that are most important for their application. 

Further, we seek to accomplish these design tasks \emph{free from any specific expansion representation} involving CST or splines---i.e., in a manner that does not parametrize the form of a specific basis expansion. Lastly, we explore a concept of \emph{consistently deforming} the airfoils defining a 3D blade such that the total parameter dimensionality is independent of the number of provided 2D cross sections. This leads to a novel framework for the design of next-generation aerodynamic shapes over a principled choice of reduced dimension domain.

\subsection{Contributions}
This work formally develops the use of specific matrix manifolds as underlying parent topologies for defining an affine and expansion independent parametrization of a space of discrete shapes. In general, this reveals a systematic approach to ``learning'' a submanifold of shapes through a vector space of $n$-by-$2$ full-rank matrices. Our contributions include:
\begin{itemize}
    \item The definition of separable forms of discrete shapes independent of affine deformations and designed expansions for 2D and 3D design
    \item A metric space of discrete shapes with an improved notion of distance between shapes over a novel data-driven domain definition
    \item Detailed computational routines for learning a manifold-valued analytic generative model of shapes in seconds
    \item Definition of a 3D blade/wing using affine-independent interpolation of designed 2D airfoil shapes
    \item A novel approach for 3D design involving consistent blade/wing deformations---minimizing parameter count and deforming the blade in an intuitive way
\end{itemize}
We note that these novel interpretations applied to airfoil design are closely related to the pioneering seminal work of David G. Kendall in 1977 summarized in short~\cite{kendall1989} and elaborated in detail~\cite{kendall2009shape}. As such, modern treatments of discrete shapes are commonly referred to as Kendall shape spaces.

\section{Discrete Representation \& Deformation}
We introduce shapes defined by smooth curves inducing discrete shapes as matrices that are independent of a choice of basis expansion. In detail, we discuss deformations to and standardization of these discrete shapes using simple linear algebra decompositions. We also analyze the convergence of data preprocessing and provide comparisons to AI/ML generative models. Finally, we cover detailed Riemannianian interpretations informing computations necessary to build implementations from scratch.

We begin by developing the discretized representations of 2D airfoil shapes as the foundation for data-driven parametrizations that will facilitate 2D and 3D aerodynamic design. The challenge of interest in modern aerospace shape design is to define shape deformations that are independent of the well-studied and aerodynamically sensitive \emph{affine deformations} to a shape. Affine deformations over planar coordinate axes, typically parametrized as scaling and rotation, can greatly affect aerodynamic performance (lift and drag profiles) of an airfoil. Additionally, scaling is not limited to simply increasing the shape volume but can also include independent vertical or horizontal scaling as well as the shearing the shape's image in the plane. Thus, it is important to decouple affine deformations from higher-order deformations as \emph{undulations}\footnote{In this complementary context, we define an undulation in the shape as any remaining deformation that are not represented by linear deformations---i.e., informally considered higher-order or nonlinear variations in the shape.} and study them independently in 2D design. Additionally, we require geometric representations that preserve these often carefully chosen affine characteristics for 3D design.

A 2D shape can be represented as a boundary defined by the open (i.e., injective or one-to-one) or closed (i.e., injective, except at endpoints) curve ${\bm{c}:\mathcal{I} \subset \mathbb{R} \rightarrow \mathbb{R}^2:s \mapsto \bm{c}(s)}$, where $\mathcal{I}$ is a compact domain. Without loss of generality, we can assume that $\mathcal{I} = [0,1]$. In practice, we consider a discrete representation the 2D airfoil shape as an ordered sequence of $n$ \emph{landmarks} $(\bm{x}_i) \in \mathbb{R}^2$ for $i=1,\dots,n$ and $n\geq 3$ along the curve $\bm{c}(s)$. That is, we have landmark points ${\bm{x}_i = \bm{c}(s_i)}$ for ${0 \leq s_1 < s_2 <\dots < s_n \leq 1}$. Moving along the curve, this sequence of planar vectors defining the airfoil shape results in the matrix ${X = (\bm{x}_1, \dots, \bm{x}_n )^\top \in \mathbb{R}_*^{n \times 2}}$, where ${\mathbb{R}_*^{n \times 2}}$ refers to the set of 
full-rank $n \times 2$ matrices (i.e., the noncompact Stiefel manifold). The full-rank restriction ensures that we do not consider degenerate $X$ as a feasible discrete representation of an airfoil shape.

Pivotal in this analysis is that \emph{we do not require a choice of basis expansion as a representation to deform $\bm{c}$}. Instead, we assume we have access to a database of discrete shapes generated by a diverse set of potentially different representations defining a variety of airfoil shapes. Then, we work with statistics of discrete shapes as opposed to statistics associated with coefficients in an expansion.

The innovative characteristic of the proposed approach is representing discrete airfoil shapes as elements of a Grassmann manifold (or Grassmannian) $\mathcal{G}(n, 2)$ paired with a corresponding affine transformation defined by an invertible $2\times2$ matrix plus a translation. This definition of the airfoil shape gives rise to a \emph{separable representation}, making important subsets of deformations independent and allowing designers to make interpretable, systematic changes to airfoil shapes over either type of deformation. For example, one may seek to preserve average airfoil scale characteristics while independently studying all remaining undulating deformations as perturbations over the Grassmannian. Further, this separability enables the extension of 2D shape parametrizations to 3D blade and wing shapes in a low-dimensional and consistent manner.

\subsection{Affine Deformations} \label{subsec:affine}
We discuss affine deformations describing large-scale planar deformations to shapes---namely, the linear term. We present linear deformations that are known to vary aerodynamic performance significantly and/or require specific control informed by design constraints. This motivates the need to separate linear deformations in shape representations from higher order oscillations or undulations in the shapes.

Affine deformations (i.e., scale, rotation, shear, and translation) of an airfoil have the form $M^{\top}\bm{c}(s) + \bm{b}$, where $M \in GL_2$ is an element from the set of all invertible $2\times 2$ matrices\footnote{For brevity, we simply refer to $GL_2(\mathbb{R})$ as $GL_2$ given all data and computation is over the reals.} and $\bm{b} \in \mathbb{R}^2$. Note that deformations by rank-deficient $M$ would collapse the shape to a line or to a single point and are not considered physically relevant as they have zero area.

A shape represented by associated boundary $\bm{c}:\mathcal{I}\rightarrow \mathbb{R}^2$ can be expressed generally as $
\bm{c}(s) = (c_1(s), c_2(s))^{\top}
$ such that the univariate functions $c_i$ for $i=1,2$ admit basis expansions $c_i(s) = \sum_{k \in \mathcal{K}} a_{ki}B_{ik}(s)$ as B-splines of fixed order, B\'ezier curves, or otherwise. Setting $(\Phi(s))_{ik} = (B_{ik}(s))$ and $(A)_{ki} = a_{ki}$, we can write this through the lens of linear algebra as $\bm{c}(s) = \sum_{j=1}^2\bm{e}_j(\bm{e}_j^{\top}\Phi(s)A \bm{e}_j)$ where $\bm{e}_j$ is the $j$-th column of the $2$-by-$2$ identity matrix, $A$ is a $\vert\mathcal{K}\vert$-by-$2$ matrix of the coefficients parametrizing the curve, and $\Phi(s)$ a $2$-by-$\vert\mathcal{K}\vert$ matrix of evaluated basis functions. By varying $A$, we can deform the shape as $D(A)$ through some mapping $D(\cdot)$ returning a matrix of appropriate dimension and rank. However, this mapping is unknown in general and could vary dramatically from one representation $A$ to the next. Applying $M$ to the curve as a \emph{scaling/rotation/shearing} deformation then modifies $D(A)$ and vice versa.

As an example, a common choice is to take $c_1(s) = s$ as the identity with all other basis functions in $\Phi$ set to zero along this $i=1$ horizontal coordinate. Consequently, the vertical direction is parametrized along the horizontal planar-coordinate axis by $s$. Any deformation $D(A)$ to coefficients of the expansion deform the curve as $\bm{\hat{c}}(s) = \sum_{j=1}^2\bm{e}_j(\bm{e}_j^{\top}\Phi(s)D(A) \bm{e}_j)$ in the planar-vertical direction which is equivalent to the \emph{graph} of the deformed function $\hat{c}_2$ over a potentially scaled domain---i.e., $(\hat{a}_{11}s, \sum_{k} \hat{a}_{k2}B_{2k}(s))$ for $\hat{a}_{11} > 0$ avoiding reflection and maintaining the rank of $\hat{A}$ provided at least one $\hat{a}_{k2} \neq 0$ for $k=2,\dots,\vert\mathcal{K}\vert$. Any deformation of the curve---parametrized in any manner as $D$---is then realized as a new expansion, $\hat{A} = D(A)$. However, applying a linear scaling $M$ modifies the expansion and vice versa thus coupling deformation types as
\begin{align*}
    \bm{\hat{c}}(s) &= 
M^{\top}\left(\begin{matrix}
\hat{a}_{11}s,\\ \sum_{k} \hat{a}_{k2}B_{2k}(s)
\end{matrix}\right) \\&=\left(\begin{matrix}
M_{11}\hat{a}_{11}s + M_{21}\sum_{k} \hat{a}_{k2}B_{k2}(s)\\ M_{12}\hat{a}_{11}s+ M_{22}\sum_{k} \hat{a}_{k2}B_{k2}(s)\end{matrix}\right)
\end{align*}
where $M_{ij}$ are the corresponding entries of $M$. Even in the special case that off-diagonal elements of $M$ are zero (no shearing) and $M_{11} = 1$, $M_{22}$ must be non-zero to maintain $M\in GL_2$ but $M_{22}$ is modified by factoring any non-zero common denominator from the coefficients in the expansion of $\hat{c}_2$---i.e., generally, $D(A)$ represents the effect of a vertical scaling by a common factor. Thus, changing $\hat{A}$ changes scaling $M$ and changing $M$ changes our expansion $\hat{A}$ despite the systematic choice of expansion along the horizontal planar-coordinate---which is coupled to $M_{22}$ at a minimum. Of course, this is further complicated if $M$ constitutes a rotation or shearing. In other words, notice that the specific form of $\Phi$ does not decouple linear deformations $M$ applied to the shape represented by deforming $A$ via unknown $D(\cdot)$. However, specific choices of $D(\cdot)$ may accomplish scale-invariance subject to corresponding constraints on any given representation---e.g., restricting a function space to a sphere~\cite{hagwood2013testing}. 

As opposed to working with specific choices of $\Phi$ and constrained $D(\cdot)$, we opt to work with discrete shapes $X$ \emph{independent of chosen expansion representations}. Additionally, we have no prudent notion of distance in a high dimensional space containing $A$ and $\hat{A}$ which complicates selection of domain definitions when a Euclidean distance is presumed insufficient.

For a discrete shape representation, affine deformations can be written as the smooth right action with translation ${XM \, + \, 1_{n,2}\text{diag}(\bm{b})}$, where $1_{n,2}$ denotes the $n \times 2$ matrix of ones and $\text{diag}(\bm{b}) = \sum_{j=1}^2(\bm{e}_j^{\top}\bm{b})\bm{e}_j\bm{e}_j^{\top}$. Note that the translation of the shape $\bm{b}$ does not change the intrinsic characteristics of the shape (i.e., it has no deforming effect) and is generally of little interest for 2D design problems. For 3D blade design, $\bm{b}$ locates the landmark airfoils relative to one another and can define the center of rotation. 

Previous work on sensitivity analysis of CST parameters representing airfoil shapes has revealed certain shape deformations can dramatically change the coefficients of lift and drag~\cite{Grey2017, Glaws:2022a}. These deformations are similar to affine deformations of simultaneously changing camber and thickness---a result consistent with laminar flow theory. This dominating influence of affine deformations on aerodynamic quantities of interest inhibits the nuanced study of a richer set of perturbations to airfoil shapes, which is becoming increasingly important to continued progress in aerodynamics research. For example, the set of ``dents'' and ``dings'' common to damage and manufacturing defects---e.g., leading edge erosion and soiling of an airfoil shape---cannot be described entirely by affine deformations. However, a fundamental understanding of the impact of these features on aerodynamic performance can lead to increased longevity of expensive and difficult-to-replace components such as offshore wind turbine blades. This motivates the need for a set of parameters that describe deformations independent of those in this dominating class of affine transformations. More precisely, we seek transformations to separately treat smooth right actions over $GL_2$. This line of research was initially proposed as an extension of~\cite{Grey2017} in~\cite{grey2019active}.

Affine deformations constitute only a subset of the possible important aerodynamic deformations. We contend that aerodynamics will be significantly influenced by any parametrization, composition, or generalization of scaling/rotation so long as $M \in GL_2$. Moreover, these affine deformations are critical for 3D design and are usually constrained or rigorously chosen when selecting nominal definitions of shapes. For example, a useful example parametrization of the linear term is
\begin{align} \label{eq:eg_GL2}
    L_4:\,&\mathcal{L} \subset \mathbb{R}^4 \rightarrow GL_2\\ \nonumber
        &\bm{\ell} \mapsto \ell_1\left[\begin{matrix}
            \ell_2 & 0 \\
            0 & \ell_3\\
        \end{matrix}\right]\left[\begin{matrix}
            \cos(\ell_4) & \sin(\ell_4)\\
	        -\sin(\ell_4) & \cos(\ell_4)
        \end{matrix}\right].
\end{align}
This parametrization is representative of the types of systematic deformations chosen or constrained in blade design. That is, we compute volumetric ($\ell_1$) and coordinate-aligned horizontal ($\ell_2$) and vertical ($\ell_3)$ scalings, then rotate the shape into the final angle of attack ($\ell_4$) for assembly or modeling. Although \eqref{eq:eg_GL2} is not necessarily a common parametrization of $GL_2$, assuming all of $GL_2$ is aerodynamically significant offers more flexibility for designers to select or fix arbitrary deformations over $GL_2$ beyond those parametrized by \eqref{eq:eg_GL2} that may be deemed interesting.

We seek to decouple and preserve affine features for blade and wing design through a set of inferred shape deformations over the Grassmannian that are independent of $GL_2$. We also discuss separable shape tensors for parametrizing scale variations independent of rotation/reflection for individual airfoil (2D) design.

\subsection{Separable Shape Tensors}
We introduce the Grassmannian $\mathcal{G}(n,q)$ as a topology where variations in discrete shapes due to linear deformations are ``divided out.'' We describe how to map shapes to representative elements of the Grassmannian as Landmark-Affine (LA) standardizations using:  (i) the singular value decomposition and (ii) the related polar decomposition. This motivates parametrizations of sections through product manifolds:  (i) $\mathcal{G}(n,2) \times GL_2$ and (ii) $\mathcal{G}(n,2) \times S_{++}^2$, respectively. These will define the ``parent'' topologies for submanifolds of separable shape tensors.


\subsubsection{Landmark-Affine Standardizations}
\begin{figure}
    \centering\includegraphics[width=0.95\linewidth]{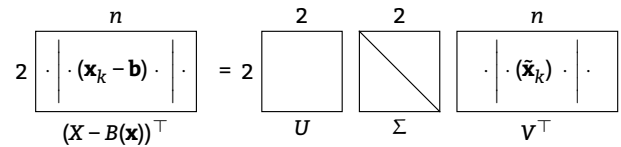}
    \caption{Thin singular value decomposition (SVD) of centered landmarks with appropriate dimensions and corresponding notation. Lines motivate an intuition for a relevant partitioning of any given matrix.}
    \label{fig:SVD}
\end{figure}
\begin{figure}
	\centering\includegraphics[width=0.95\linewidth]{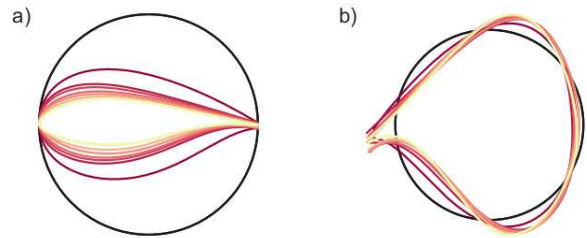}
	\caption{Collection of ten nominal cross-sectional airfoils defining the International Energy Agency (IEA) 15-MW blade in (a) physical coordinates, $X$, and (b) Landmark-Affine standardized coordinates, $\tilde{X}$.}
	\label{fig:affine_transform}
\end{figure}

Given a discrete shape representation $X$ and an important affine deformation $XM + 1_{n,2}\text{diag}(\bm{b})$, we now develop the necessary interpretations for deforming shapes independent of the often constrained and notably aerodynamically sensitive affine features. Through this development, we reveal underlying parent matrix manifold topologies which will inform improved non-Euclidean considerations for subsequent computations and 3D blade interpolation.

The Grassmannian $\mathcal{G}(n,q)$ is the space of all $q$-dimensional subspaces of $\mathbb{R}^n$. Formally, ${\mathcal{G}(n,q) \cong \mathbb{R}^{n\times q}_*/GL_q}$, meaning that elements of the Grassmannian are invariant under $GL_q$ transformations where $GL_q$ is the set of all invertible $q \times q$ matrices. Given this invariance, we may consider an element of the Grassmannian ${[\tilde{X}] \in \mathcal{G}(n,q)}$ to be the equivalence class of all matrices with the same column span as the representative element $\tilde{X} \in \mathbb{R}^{n \times q}_*$~\cite{absil2008optimization}. That is, the equivalence class ${[A] = \left\lbrace B \in \mathbb{R}^{n\times q}_* \,:\, B \sim A \right\rbrace}$ is defined by equivalence relation ${A \sim B}$ such that ${\text{Range}(A) = \text{Range}(B)}$. In this way, every element of the Grassmannian is a full-rank matrix modulo $GL_q$ deformations. Thus, deformations over $\mathcal{G}(n,q)$ are independent of affine deformations (ignoring the nondeforming translations)---i.e., $[XM] = [X]$. By representing discretized airfoil shapes as elements of the Grassmannian, we ensure that deformations to shapes or differences between shapes in this space are, by definition, decoupled from the aerodynamically important affine deformations---e.g., linear transformations varying camber, thickness, and length.

It is common to view the Grassmannian as a quotient topology of orthogonal subgroups such that the $n$ landmarks of any representative element $\tilde{X}$ have sample covariance proportional to the identity matrix---i.e., $\tilde{X}^\top\tilde{X} = I_q$~\cite{edelman1998geometry, gallivan2003efficient}. In practice, this means that a representative computational element of the Grassmannian is an $n \times q$ matrix with orthonormal columns~\cite{edelman1998geometry}. This perspective offers certain computational advantages and motivates a scaling of airfoil landmark data for computations over $\mathcal{G}(n,2)$ for airfoil design. In our case, $n$ is equal to the number of landmarks, and $q = 2$ is the dimension of the ambient space where the shape lives for design.

To represent physical airfoil shapes as elements of the Grassmannian, we define the Landmark-Affine (LA) standardization~\cite{bryner20142d} as a mapping ${\pi:\mathbb{R}^{n\times 2}_* \rightarrow \mathcal{G}(n,2)}$. LA standardization normalizes the shape to have zero sample mean (translation invariance) and identity sample covariance (scale invariance) over the $n$ landmarks defining the shape. The remainder of this section discusses computation of the LA standardization and examines its properties.

Given a discrete airfoil shape with landmarks $X \in \mathbb{R}_*^{n \times 2}$, let ${\bm{b}(X) = (1/n) \, X^{\top}1_{n,1}}$ be the discrete center of mass and compute the thin singular value decomposition (SVD) of the centered airfoil ${(X - B(X))^{\top} = U\Sigma V^{\top}}$, where ${B(X) = 1_{n,2}\,\text{diag}(\bm{b}(X))}$. Then, define $\tilde{M}$ to be the $2\times2$ invertible matrix
\begin{equation} \label{eq:LA_M}
    \tilde{M} = \Sigma U^{\top}.
\end{equation}
The mapping between the airfoil $X$ and its LA-standardized representation $\tilde{X}$ is
\begin{equation} \label{eq:LA_standardization}
    X - B(X) = \tilde{X} \tilde{M}.
\end{equation}
As a result, this definition of $\tilde{X}$ provides scale standardization
\begin{align*}
    \tilde{X}^{\top}\tilde{X} &= \tilde{M}^{-\top}(X-B(X))^{\top}(X-B(X))\tilde{M}^{-1} \\
    &= \tilde{M}^{-\top}(U\Sigma V^{\top})(V \Sigma U^{\top})\tilde{M}^{-1} \\
    &= \Sigma^{-1}U^{\top}(U\Sigma^2U^{\top}) U\Sigma^{-1}\\
    &= (\Sigma^{-1}\Sigma)(\Sigma\Sigma^{-1})\\
    &= I_2,
\end{align*}
consistent with a \emph{whitening transform}~\cite{hyvarinen2000independent}. From \eqref{eq:LA_standardization}, we have $\tilde{X} = V$, with standardized landmarks along the rows as $\tilde{X} = (\tilde{\bm{x}}_1,\dots,\tilde{\bm{x}}_n)^{\top} \in \mathbb{R}^{n\times 2}_*$. To clarify, Fig.~\ref{fig:SVD} depicts the dimensionality of the various matrices and the notation. The LA standardization \eqref{eq:LA_standardization} satisfies assumptions to apply various intrinsic parametrizations for computing normal coordinates over the Grassmannian~\cite{edelman1998geometry}.

For $[\tilde{X}] \in \mathcal{G}(n,2)$, $\tilde{X}$ is a \emph{representative} (Stiefel) element of the Grassmannian, defined uniquely up to any $GL_2$ transformation~\cite{edelman1998geometry, absil2008optimization}. Abstractly, we map a given discrete airfoil shape $X$ to an equivalence class $[\tilde{X}]$ via $\pi:\mathbb{R}_*^{n\times 2} \rightarrow \mathcal{G}(n,2):X \mapsto [\tilde{X}]$ such that
\begin{equation}
    \pi(X) = [\tilde{X}] = [X - B(X)] .
\end{equation}
Next, we show that $\pi(X) = [\tilde{X}]$ is surjective, thus admitting a parametrizable \emph{right inverse}, and satisfies the desired scale and translation invariance properties. Lastly, we show that $\pi$ is the canonical projection thus the right inverse parametrizes sections through $\mathbb{R}^{n\times 2}_*$ as a submanifold. The results establish a principled framework for ``learning'' a (sub)manifold of discrete shapes in $\mathbb{R}^{n\times 2}_*$ with the desired separability.

\begin{prop} \label{prop:surjective}
$\pi$ is surjective.
\end{prop}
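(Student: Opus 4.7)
The plan is to prove surjectivity by exhibiting, for every class $[\tilde{X}]\in\mathcal{G}(n,2)$, an explicit preimage in $\mathbb{R}_*^{n\times 2}$. The natural candidate is to take $X$ itself to be a suitably chosen (centered) representative of the equivalence class, so that $B(X)=0$ and hence $\pi(X)=[X-B(X)]=[X]=[\tilde{X}]$ follows tautologically. All the work is then to show that each $GL_2$-orbit admits such a centered representative, and the LA construction given just above the proposition is precisely what produces one.

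Concretely, I would proceed as follows. First, fix any representative $X_0\in\mathbb{R}_*^{n\times 2}$ of the target class $[\tilde{X}]$, form the thin SVD $(X_0-B(X_0))^\top = U\Sigma V^\top$, and set $\tilde{X}=V$, so that $\tilde{X}^\top\tilde{X}=I_2$ by construction and $[\tilde{X}]$ coincides with $[X_0]$ via the $GL_2$-action $\tilde{X}=(X_0-B(X_0))(\Sigma U^\top)^{-1}$. Next, I would verify that this LA-standardized representative is centered, i.e.\ $1_{n,1}^\top \tilde{X}=0$: this follows from $1_{n,1}^\top(X_0-B(X_0))=0$ combined with the invertibility of $\Sigma U^\top$, which passes the zero through to $1_{n,1}^\top V=0$. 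Finally, setting $X:=\tilde{X}$ gives $\bm{b}(X)=(1/n)X^\top 1_{n,1}=0$, hence $B(X)=0$, hence $\pi(X)=[X]=[\tilde{X}]$, establishing surjectivity.

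The main obstacle, and the only step that needs care, is the nondegeneracy requirement that $X_0-B(X_0)$ be full rank so that the SVD yields a rank-two $V$ and $\tilde{X}$ actually lies in $\mathbb{R}_*^{n\times 2}$. Writing $X_0-B(X_0)=HX_0$ with the centering projector $H=I_n-\tfrac{1}{n}1_{n,1}1_{n,1}^\top$, one sees that $HX_0$ drops rank precisely when $1_{n,1}\in\operatorname{col}(X_0)$, i.e.\ when some nontrivial linear combination of the two coordinate columns of the landmark matrix is a constant vector. This is a codimension-one (measure-zero) condition on $\mathbb{R}_*^{n\times 2}$, and is implicitly excluded in the airfoil setting since any physically meaningful discrete shape has landmarks that are not all colinear in a direction parallel to $1_{n,1}$. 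I would therefore either state this nondegeneracy as an explicit hypothesis absorbed into $\mathbb{R}_*^{n\times 2}$, or note that the image of $\pi$ is the open dense subset of $\mathcal{G}(n,2)$ consisting of subspaces with a centered basis, which is the only part of the Grassmannian relevant to airfoil representations anyway.
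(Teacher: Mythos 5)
Your overall strategy---produce a centered representative on which $\pi$ acts as the identity---is sound and actually closer in spirit to the paper's one-line proof than it may appear (the paper simply takes an arbitrary scaled basis $(a\bm{v}_1, b\bm{v}_2)$ of the target span and observes that it is full rank). But your execution has a genuine gap at the step where you claim that $[\tilde{X}]$ coincides with $[X_0]$ ``via the $GL_2$-action $\tilde{X} = (X_0 - B(X_0))(\Sigma U^{\top})^{-1}$.'' The $GL_2$-action (right multiplication) relates $\tilde{X}$ to the \emph{centered} matrix $X_0 - B(X_0) = HX_0$, where $H = I_n - \tfrac{1}{n}1_{n,1}1_{n,1}^{\top}$, not to $X_0$ itself; centering is a \emph{left} multiplication by a rank-$(n-1)$ projector, and it changes the column span whenever $\mathrm{Range}(X_0) \not\perp 1_{n,1}$. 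So your construction exhibits a preimage of $[HX_0]$, not of the target class $[X_0]$. Indeed, since every column of $X - B(X)$ is centered for \emph{any} input $X$, the image of $\pi$ is contained in $\left\lbrace W \in \mathcal{G}(n,2) : W \subseteq 1_{n,1}^{\perp}\right\rbrace$, and your argument correctly hits exactly that set---nothing more is possible for this $\pi$.

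This also breaks your fallback characterization: the set of $2$-planes admitting a centered basis is not ``open dense'' in $\mathcal{G}(n,2)$; it is the \emph{closed, codimension-two} sub-Grassmannian (isomorphic to $\mathcal{G}(n-1,2)$) of planes lying inside the hyperplane $1_{n,1}^{\perp}$, because a basis of mean-zero vectors spans only mean-zero vectors. The honest statement is that $\pi$ surjects onto this sub-Grassmannian, which is where all LA-standardized shapes live and is implicitly the codomain throughout the paper; to be fair, the paper's own proof shares this blind spot, since $\pi((a\bm{v}_1, b\bm{v}_2)) = [(a\bm{v}_1, b\bm{v}_2)]$ only when the basis vectors are themselves centered, i.e., only when the target class is orthogonal to $1_{n,1}$. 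Your well-definedness discussion is the part that is right in substance: $HX_0$ drops rank precisely when $1_{n,1} \in \mathrm{col}(X_0)$ (though that locus has codimension $n-2$ in $\mathbb{R}^{n\times 2}_*$, not one---still measure zero for the relevant $n$). The clean repair needs no SVD at all: for a target $W \perp 1_{n,1}$, take any basis of $W$ as the columns of $X$; then $B(X) = 0$, so $\pi(X) = [X] = W$ immediately, which is essentially the paper's argument restricted to where it is valid.
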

\begin{proof}
For any $[X] \in \mathcal{G}(n,2)$, we can take an arbitrary basis $[X] = \text{span}\left\lbrace \bm{v}_1, \bm{v}_2\right\rbrace$ for linearly independent $\bm{v}_1, \bm{v}_2 \in \mathbb{R}^n$. Consequently, taking arbitrary $(a\bm{v}_1, b\bm{v}_2) \in \mathbb{R}^{n\times 2}_*$ for $a,b\neq 0$ implies $\pi((a\bm{v}_1, b\bm{v}_2)) = [X]$.
\end{proof}

\begin{prop} \label{prop:scale_invar}
$\pi$ is scale invariant such that $\pi(XM) = \pi(X)$ for any $M \in GL_2$.
\end{prop}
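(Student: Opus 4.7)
The plan is to reduce the claim to the defining property of the Grassmannian equivalence relation (invariance under right $GL_2$ action) by showing that the centering operation $X \mapsto X - B(X)$ is \emph{equivariant} with respect to right multiplication by $M \in GL_2$. Once $XM - B(XM) = (X - B(X))M$ is established, the conclusion is immediate from the fact that right multiplication by an invertible $2 \times 2$ matrix preserves column span.

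First I would unravel the definition of the centroid under the action of $M$. Since $\bm{b}(X) = (1/n) X^\top 1_{n,1}$, one gets
\begin{equation*}
    \bm{b}(XM) = \frac{1}{n}(XM)^\top 1_{n,1} = \frac{1}{n} M^\top X^\top 1_{n,1} = M^\top \bm{b}(X).
\end{equation*}
Next, I would observe that $B(Y) = 1_{n,2}\,\text{diag}(\bm{b}(Y))$ is the $n \times 2$ matrix whose every row equals $\bm{b}(Y)^\top$. Consequently $B(X)M$ has every row equal to $\bm{b}(X)^\top M$, while $B(XM)$ has every row equal to $\bm{b}(XM)^\top = (M^\top \bm{b}(X))^\top = \bm{b}(X)^\top M$. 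This gives the key identity $B(XM) = B(X)M$, hence
\begin{equation*}
    XM - B(XM) = (X - B(X))M.
\end{equation*}

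To finish, I would invoke the definition of the Grassmannian equivalence class: $[A] = [B]$ iff $\text{Range}(A) = \text{Range}(B)$. Since $M \in GL_2$ is invertible, right multiplication by $M$ merely takes invertible linear combinations of the two columns of $X - B(X)$ and therefore preserves the column span. Thus $[(X - B(X))M] = [X - B(X)]$, and
\begin{equation*}
    \pi(XM) = [XM - B(XM)] = [(X - B(X))M] = [X - B(X)] = \pi(X).
\end{equation*}

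There is no serious obstacle here; the only point requiring a little care is the bookkeeping that shows $B(XM) = B(X)M$, which boils down to noting that the rows of $B(\cdot)$ are constant copies of the centroid and that $\bm{b}$ transforms covariantly as $\bm{b} \mapsto M^\top \bm{b}$. The rest is the defining property of the Grassmannian quotient stated earlier in the excerpt.
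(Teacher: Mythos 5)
Your proof is correct and takes essentially the same route as the paper's: both establish the centroid equivariance $\bm{b}(XM) = M^{\top}\bm{b}(X)$, deduce the key identity $B(XM) = B(X)M$, and conclude from the $GL_2$-invariance of the Grassmannian equivalence class that $[(X - B(X))M] = [X - B(X)]$. Your row-wise justification that every row of $B(\cdot)$ is a copy of the centroid is slightly tidier bookkeeping than the paper's direct manipulation of the $1_{n,2}\,\text{diag}(\cdot)$ identities, but the argument is the same.
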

\begin{proof}
Defining $B(X) = 1_{n,2}\text{diag}(\bm{b}(X))$ where $\bm{b}(X) = (1/n)X^{\top}1_{n,1}$,
\begin{align*}
    \pi(XM) &= [XM - B(XM)]\\
    &= [XM - 1_{n,2}\text{diag}((1/n)(XM)^{\top}1_{n,1})]\\
    &= [XM - 1_{n,2}\text{diag}(M^{\top}\bm{b}(X))]\\
    &= [XM - 1_{n,2}\text{diag}(\bm{b}(X)^{\top}M)]\\
    &= [XM - (1_{n,2}\text{diag}(\bm{b}(X)))M]\\
    &= [(X - B(X))M]\\
    &= [X - B(X)]\\
    &=\pi(X)
\end{align*}
\end{proof}

\begin{prop} \label{prop:trans_invar}
$\pi$ is translation invariant such that ${\pi(X + 1_{n,2}\text{diag}(\bm{b}')) = \pi(X)}$ for any $\bm{b}' \in \mathbb{R}^2$.
\end{prop}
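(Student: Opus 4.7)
The plan is to mimic the structure of the proof of Proposition~\ref{prop:scale_invar}, but working additively rather than multiplicatively. Writing $X' = X + 1_{n,2}\text{diag}(\bm{b}')$, the task reduces to showing that the centered matrix $X' - B(X')$ equals $X - B(X)$, since equivalence classes of identical representatives are trivially equal. Thus the whole proof boils down to tracking how the discrete center of mass operator $\bm{b}(\cdot)$ transforms under a rigid translation by $\bm{b}'$.

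First I would compute $\bm{b}(X') = (1/n)(X')^{\top}1_{n,1}$ by expanding and using linearity to get $\bm{b}(X) + (1/n)\,\text{diag}(\bm{b}')^{\top}1_{n,2}^{\top}1_{n,1}$. The key bookkeeping step is the identity $1_{n,2}^{\top}1_{n,1} = n\,\bm{1}_2$ (each row of $1_{n,2}^{\top}$ sums $n$ ones), so the second summand collapses to $\text{diag}(\bm{b}')\bm{1}_2 = \bm{b}'$, yielding $\bm{b}(X') = \bm{b}(X) + \bm{b}'$. Applying $B(\cdot) = 1_{n,2}\text{diag}(\cdot)$ and using linearity of $\text{diag}$ then gives $B(X') = B(X) + 1_{n,2}\text{diag}(\bm{b}')$.

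From here the cancellation is immediate: $X' - B(X') = X + 1_{n,2}\text{diag}(\bm{b}') - B(X) - 1_{n,2}\text{diag}(\bm{b}') = X - B(X)$, so $\pi(X') = [X - B(X)] = \pi(X)$. I do not anticipate any real obstacle; the only subtlety is bookkeeping around the $1_{n,2}\text{diag}(\bm{b}')$ notation and the fact that $\text{diag}(\bm{b}_1 + \bm{b}_2) = \text{diag}(\bm{b}_1) + \text{diag}(\bm{b}_2)$, which follows directly from the definition $\text{diag}(\bm{b}) = \sum_{j=1}^2(\bm{e}_j^{\top}\bm{b})\bm{e}_j\bm{e}_j^{\top}$ given earlier in the paper. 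The argument therefore slots in as a short displayed chain of equalities mirroring the one used for scale invariance.
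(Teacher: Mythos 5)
Your proposal is correct and follows essentially the same route as the paper's proof: both expand the center-of-mass operator under translation, use the identities $1_{n,2}^{\top}1_{n,1} = n\,1_{2,1}$, $\text{diag}(\bm{b}')1_{2,1} = \bm{b}'$, and additivity of $\text{diag}$ to conclude $B(X + 1_{n,2}\text{diag}(\bm{b}')) = B(X) + 1_{n,2}\text{diag}(\bm{b}')$, and then cancel to obtain $X' - B(X') = X - B(X)$. Your only (immaterial) variation is computing $\bm{b}(X')$ first and then applying $B(\cdot)$, while the paper manipulates $B(\cdot)$ directly.
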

\begin{proof}
For arbitrary $\bm{b}' \in \mathbb{R}^2$,
\begin{equation*}
\pi(X + 1_{n,2}\text{diag}(\bm{b}')) = [X + 1_{n,2}\text{diag}(\bm{b}') - B(X + 1_{n,2}\text{diag}(\bm{b}')) ] . 
\end{equation*}
Then, we recognize that
\begin{align*}
B(X + 1_{n,2}\text{diag}(\bm{b}'))
=& 1_{n,2} \text{diag} ((1/n)(X + 1_{n,2}\text{diag}(\bm{b}'))^\top1_{n,1}) \\
=& 1_{n,2} \text{diag} ((1/n)X^\top1_{n,1} \\&+ (1/n)\text{diag}(\bm{b}')1_{n,2}^\top1_{n,1}) \\
=& 1_{n,2} \text{diag} ((1/n)X^\top1_{n,1} + \text{diag}(\bm{b}')1_{2,1}) \\
=& 1_{n,2} \text{diag} ((1/n)X^\top1_{n,1}) \\&+ 1_{n,2} \text{diag} (\text{diag}(\bm{b}')1_{2,1}) \\
=& B(X) + 1_{n,2} \text{diag}(\bm{b}') .
\end{align*}
Noting that $1_{n,2}^{\top}1_{n,1} = n \, 1_{2,1}$, $\text{diag}(\bm{b}') 1_{2,1} = \bm{b}'$, and $\text{diag}(\bm{u}+\bm{b}') = \text{diag}(\bm{u}) + \text{diag}(\bm{b}')$. Plugging this result into the equation above yields
\begin{align*}
\pi(X + 1_{n,2}\text{diag}(\bm{b}')) 
&= [X + 1_{n,2}\text{diag}(\bm{b}') - B(X + 1_{n,2}\text{diag}(\bm{b}')) ] \\
&= [X + 1_{n,2}\text{diag}(\bm{b}') - B(X) - 1_{n,2} \text{diag}(\bm{b}') ] \\
&= [X - B(X)] \\
&= \pi(X) . \\
\end{align*}
\end{proof}

\begin{prop} \label{prop:proj}
$\pi$ is the canonical projection
\end{prop}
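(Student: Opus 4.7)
The plan is to realize $\pi$ as the quotient map associated with the \emph{extended} equivalence relation on $\mathbb{R}^{n\times 2}_*$ that lumps together both the $GL_2$ action and translation. Declare $X \sim Y$ iff $Y = XM + 1_{n,2}\text{diag}(\bm{b})$ for some $(M,\bm{b}) \in GL_2 \times \mathbb{R}^2$. Propositions~\ref{prop:scale_invar} and~\ref{prop:trans_invar} together show that $\pi$ is constant on $\sim$-classes, so $\pi$ descends to a well-defined map $\bar\pi : \mathbb{R}^{n\times 2}_*/\!\sim\; \to\; \mathcal{G}(n,2)$; by Proposition~\ref{prop:surjective}, $\bar\pi$ is surjective.

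The heart of the argument is the converse: if $\pi(X) = \pi(Y)$, then $X \sim Y$. Starting from $[X - B(X)] = [Y - B(Y)]$ in $\mathcal{G}(n,2)$, the realization $\mathcal{G}(n,2) \cong \mathbb{R}^{n\times 2}_*/GL_2$ yields an $M \in GL_2$ with $Y - B(Y) = (X - B(X))M = XM - B(X)M$. The identity used inside the proof of Proposition~\ref{prop:scale_invar} gives $B(X)M = 1_{n,2}\text{diag}(M^{\top}\bm{b}(X))$, hence
\begin{equation*}
    Y = XM + 1_{n,2}\text{diag}\!\left(\bm{b}(Y) - M^{\top}\bm{b}(X)\right),
\end{equation*}
exhibiting $X \sim Y$ with translation vector $\bm{b}' := \bm{b}(Y) - M^{\top}\bm{b}(X) \in \mathbb{R}^2$. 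So $\bar\pi$ is injective as well, and hence a bijection between $\mathbb{R}^{n\times 2}_*/\!\sim$ and $\mathcal{G}(n,2)$; continuity/smoothness of $\pi$ follows from the smoothness of the centering map $X \mapsto X - B(X)$ composed with the standard principal $GL_2$-bundle projection that defines the smooth structure on the Grassmannian. Together with the universal property this is exactly the statement that $\pi$ is the canonical projection.

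The main obstacle is the bookkeeping that converts the \emph{multiplicative} equivalence on the centered matrices into the combined \emph{affine} equivalence on raw matrices; this reduces to recognizing that $B(X)M$ is again of the form $1_{n,2}\text{diag}(\cdot)$, with centering vector $M^{\top}\bm{b}(X)$. Since that identity is already established inside the proof of Proposition~\ref{prop:scale_invar}, the remaining argument is essentially bookkeeping. Once Proposition~\ref{prop:proj} is in hand, surjectivity furnishes a (parametrizable) right inverse whose image is a section of the quotient bundle, and this section is precisely the submanifold of $\mathbb{R}^{n\times 2}_*$ on which the separable shape tensor representation is built.
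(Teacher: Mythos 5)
Your proposal is correct, but it reaches the conclusion by a genuinely different route than the paper. The paper's proof verifies \emph{idempotency}: it computes $\pi$ on a representative $\tilde{X} = (X - B(X))U\Sigma^{-1} = V$ of $\pi(X)$ via the thin SVD, observes $B(\tilde{X}) = \bm{0}$ and $\tilde{X}^{\top}\tilde{X} = I_2$ so that re-standardizing at most rotates/reflects within the class, and concludes $\pi(\pi(X)) = \pi(X)$---the minimal property needed to justify speaking of representative elements and parametrized sections. You instead establish the universal-property characterization of a quotient map: constancy on classes of the combined affine relation (already supplied by Props.~\ref{prop:scale_invar} and \ref{prop:trans_invar}), surjectivity (Prop.~\ref{prop:surjective}), and---your new content---the converse that the fibers of $\pi$ are \emph{exactly} the affine orbits, obtained from $Y - B(Y) = (X - B(X))M$ together with the identity $B(X)M = 1_{n,2}\text{diag}(M^{\top}\bm{b}(X))$ and linearity of $\text{diag}(\cdot)$; that computation is sound. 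Your statement is strictly stronger than the paper's: idempotency follows from it, since $\tilde{X} = X\tilde{M}^{-1} + 1_{n,2}\text{diag}(-\tilde{M}^{-\top}\bm{b}(X))$ lies in the same affine class as $X$, whereas idempotency alone does not pin down the fibers. What each buys: the paper's argument is shorter and explicitly exhibits the LA-standardized representative used downstream; yours more completely justifies the paper's subsequent claim that the right inverse of $\pi$ parametrizes sections, by identifying $\mathcal{G}(n,2)$ with the full affine quotient. Two minor caveats, neither fatal: (i) the affine moves do not globally preserve $\mathbb{R}^{n\times 2}_*$ (a translation can render a centered column constant and drop rank), so your relation $\sim$ must be read as restricted to pairs that are both full rank---reflexivity, symmetry, and transitivity survive this restriction; (ii) your step extracting $M \in GL_2$ from $[X - B(X)] = [Y - B(Y)]$ requires the \emph{centered} matrices to be full rank, an assumption the paper's own SVD computation makes implicitly as well (it fails, e.g., when a coordinate of the shape is constant), so this is a shared hypothesis rather than a gap specific to your argument.
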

\begin{proof}
It is sufficient to show that $\pi$ is idempotent onto equivalence classes, $\pi(\pi(X)) = \pi(X)$. For $\pi(X) = [\tilde{X}]$ and $(X - B(X))^{\top} = U\Sigma V^{\top}$, representative $\tilde{X} = (X-B(X))U\Sigma^{-1}$ has zero mean over rows, $B(\tilde{X}) = \bm{0}$, and $\tilde{X}^{\top}= \Sigma^{-1}U^{\top}U\Sigma V^{\top} = V^{\top}$, which at most rotates and/or reflects the shape after LA standardization informed by the thin SVD $V^{\top} = \tilde{U}I_2\tilde{V}^{\top}$. Consequently, $\pi(\pi(X)) = [\tilde{X}\tilde{U}] = [\tilde{X}] = \pi(X)$.
\end{proof}

Prop.~\ref{prop:scale_invar} motivates an alternative interpretation of $\pi$ as a $GL_2$ scale invariance, $\pi(XM) = [\tilde{X}]$. Intuitively, $\pi(X)$ \emph{standardizes} the discretized shape $X$ such that $\tilde{X}$ is as circular as possible. Fig.~\ref{fig:affine_transform} depicts a set of example transformations between these two discrete representations for a collection of wind turbine airfoil shapes. Prop.~\ref{prop:scale_invar} together with Prop.~\ref{prop:trans_invar} assert the sought affine invariance properties of a nonlinear mapping\footnote{The thin SVD from one discrete shape to the next is computed by an iterative procedure which is, in general, nonlinear over a space of changing matrices.} onto the Grassmannian, per Prop.~\ref{prop:surjective}. With Prop.~\ref{prop:surjective} and Prop.~\ref{prop:proj}, we can propose the parametrization of a \emph{section} through $\mathbb{R}^{n\times 2}_*$ using representative elements $\tilde{X}$ to build a submanifold. Consequently, we can define a separable shape tensor parametrization for discrete shapes,
\begin{equation} \label{eq:sep_shape}
    X(\bm{t}, \bm{\ell}) = (\pi^{-1}\circ [\tilde{X}])(\bm{t},\bm{\ell}) = \tilde{X}(\bm{t})M(\bm{\ell}).
\end{equation}
In this formalism, $\pi^{-1}$ is the right inverse of $\pi$ parametrized by ${(\bm{t},\bm{\ell}) \in \mathcal{T} \times \mathcal{L} \subseteq \mathbb{R}^{r + 4}}$. This parameter domain will be inferred from data-driven pairs $\lbrace [\tilde{X}_k], M_k\rbrace \subset \mathcal{G}(n,2) \times GL_2$ given by the thin SVD of discrete shapes, $\lbrace X_k \rbrace$. In practice, $M(\bm{\ell})$ could be expressed for design as the composition of a fixed nominal scaling with a parametrized affine deformation---e.g., $M(\bm{\ell}) = \overline{M}L_4(\bm{\ell})$ possibly with translations where $L_4(\bm{\ell})$ is defined in \eqref{eq:eg_GL2} and $\overline{M}$ is some fixed nominal scaling like an average. Fig.~\ref{fig:fiber} shows a simplified visual analogue of this approach for a constant average scale factor $\overline{M}$---we further elaborate on the ability to average over $GL_2$ via separability in the next section. The parametrization of the Grassmannian element $[\tilde{X}](\bm{t})$ is inferred from data-driven methods that are also discussed in later sections. We note that the dimension of $\bm{t}$ is restricted by the intrinsic dimensionality of $\mathcal{G}(n,2)$ such that $r \leq 2(n-2)$ but is practically chosen to be much smaller. 

\begin{figure}
    \centering
    \includegraphics[width=0.95\linewidth]{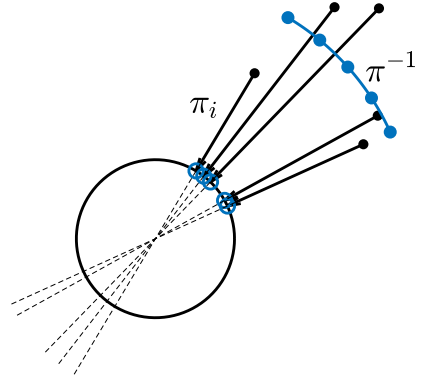}
    \caption{A simplified visualization of five individual projections $\pi_i$ (black arrows) from elements in $\mathbb{R}_*^{2\times 1}$ (black dots) onto representative elements of the upper semicircle (blue circles). Elements of $\mathcal{G}(2,1)$ are shown as dashed lines. A constant section of the fiber bundle (a submanifold of $\mathbb{R}_*^{2\times 1}$) is shown as the blue curve, with five (uniformly sampled) elements as coincident dots.}
    \label{fig:fiber}
\end{figure}

The utility of the representation in \eqref{eq:sep_shape} is the \emph{separable form} of the airfoil representation such that changes in $\bm{t}$ are independent of changes in $\bm{\ell}$. The lingering question is: Given a database of discrete airfoils $\left\lbrace X_k \right\rbrace$, \emph{how can we infer parameter distributions} of $(\bm{t},\bm{\ell})$? Alternatively, how should we define $\mathcal{T} \times \mathcal{L}$ for subsequent design tasks? The key will be utilizing data-driven approaches involving the underlying Riemannian geometry---described in the sections to follow.

\subsubsection{Mean Scales of Random Airfoils}

When considering an \emph{ensemble} of airfoil shapes $\lbrace X_k \rbrace$, an average notion of scale can be used to remove the dependencies on $M(\bm{\ell})$. For example, we could define the constant extrinsic estimate $\left. \overline{M} = (1/N)\sum_{k=1}^N \tilde{M}_{k} \right.$, where  $\tilde{M}_{k}$ is computed as in \eqref{eq:LA_M} for the corresponding $X_k$. Assuming these airfoils implicitly define some distribution over $\mathcal{L}$, this offers a notion of \emph{average scale} for parametrizing a \emph{local section of the fiber bundle} through \emph{total space} $\mathbb{R}^{n \times 2}_*$ as $(\pi^{-1} \circ [\tilde{X}])(\bm{t}; \overline{M})$~\cite{grey2019active}. Fig.~\ref{fig:fiber} depicts a simplified visual analogue for this choice of constant scaling---represented by the blue curve. 

When designed or inferred affine deformation subgroups, or arbitrary parametrizations such as \eqref{eq:eg_GL2}, are combined with translations, they can then be applied independently to $\tilde{X}(\bm{t})\overline{M}$ as a systematic design schema. As an aerodynamic interpretation, for unknown domain $\mathcal{L}$ weighted by unknown probability measure $\rho$, order-dependent compositions of camber, chord, twist, and/or thickness deformations can be independently applied to Monte Carlo approximations of average-scale shapes, ${\tilde{X}(\bm{t})\overline{M} \approx \int_{\mathcal{L}}\tilde{X}(\bm{t})M(\bm{\ell})d\rho(\bm{\ell}) = \tilde{X}(\bm{t})\int_{\mathcal{L}}M(\bm{\ell})d\rho(\bm{\ell})}$. The challenge is ensuring that $\overline{M} \in GL_2$ and that $\overline{M}$ does not arbitrarily inflate average length scales of the shape---i.e., $\overline{M}$ can result in an inflated determinant. This may require control through additional nonlinear transformations (e.g., a set of shape constraints) applied to $\tilde{X}(\bm{t})\overline{M}$ by the shape designer. Alternatively, we could pose an intrinsic mean scale over $GL_2$. In either case, the ability to average scales $\lbrace \tilde{M}_{k}\rbrace$---or separately compute higher-order moments of important and highly sensitive scale variations over alternative metric spaces---is enabled by the separability in \eqref{eq:sep_shape}. 

\subsubsection{Equivalent Polar Decomposition}
We next develop an alternative method for mapping discrete airfoil shapes to the Grassmannian based on a rotation-invariant polar decomposition. We denote the resulting product manifold of shapes as $\mathcal{G}(n,2) \times S_{++}^2$, where $S_{++}^2$ denotes the set of $2 \times 2$ symmetric positive definite (SPD) matrices. First, we define an equivalence relationship $\tilde{X} \sim_{\mathcal{O}} \tilde{X}O$ for all orthogonal $2 \times 2$ matrices $O\in \mathcal{O}(2)$ such that $[X]_{\mathcal{O}} = \left\lbrace A \in \mathbb{R}^{n \times 2}_* \,\,:\,\, A \sim_{\mathcal{O}} X\right\rbrace$. Taking the polar decomposition of the linear transformation as $M = P_2R_2$, we have $[\tilde{X}M]_{\mathcal{O}} = [\tilde{X}P_2R_2]_{\mathcal{O}} = [\tilde{X}P_2]_{\mathcal{O}}$. Thus, we can parametrize a set of rotation/reflection-invariant shapes as $\tilde{X}(\bm{t})P(\bm{\ell})$ for all $P(\bm{\ell})\in S_{++}^2$. That is, we retain the scale variations of shapes over SPD matrices and ``divide out'' deformations resulting in rotations and reflections.

Given a discrete shape $X$ and its corresponding thin SVD $\left. (X - B(X))^{\top} = U\Sigma V^{\top} \right.$, the polar decomposition becomes $\left. (X - B(X))^{\top} = PR \right.$ such that $P = U\Sigma U^{\top}$ is unique and $R = UV^{\top}$. Given $V$ as an $n \times 2$ rectangular matrix with orthonormal columns and $U$ orthogonal implies that $RR^{\top} = I_2$. Moreover, $P^{-1} = U\Sigma^{-1}U^{\top}$ is SPD. Hence, $\tilde{X} = (X - B(X))P^{-1}$ defines an equivalent normalization of scale,
\begin{align*}
    \tilde{X}^{\top}\tilde{X} &= P^{-1}(X - B(X))^{\top}(X - B(X))P^{-1} \\
    &= P^{-1}(PR)(R^{\top}P)P^{-1} \\
    &= (P^{-1}P)(PP^{-1})\\
    &= I_2.
\end{align*}
Equivalently, $\tilde{X} = R^{\top} = VU^{\top}$, which rotates/reflects the original nonunique LA-standardized shape $V$ back into the original view by $U^{\top}$. 

For an ensemble of shapes $\left\lbrace X_k \right\rbrace$, we can compute the corresponding $P_k$ from the approximated thin SVD and use the data-driven pairs to construct a submanifold from ${\lbrace [\tilde{X}_k], P_k\rbrace \subset \mathcal{G}(n,2) \times S^2_{++}}$---i.e., a set of rotation/reflection-invariant subgroups of discrete shapes. The separable form of rotation/reflection-invariant physical airfoils becomes
\begin{equation} \label{eq:polar_sep}
     X(\bm{t}, \bm{\ell}) = (\pi^{-1}\circ [\tilde{X}])(\bm{t},\bm{\ell}) = \tilde{X}(\bm{t})P(\bm{\ell})
\end{equation}
for parameters $(\bm{t}, \bm{\ell}) \in \mathcal{T} \times \mathcal{P} \subseteq \mathbb{R}^{r+3}$. Notice also that $P(\bm{\ell})$ serves to parametrize the eigenspaces of the landmark sample covariance, i.e., $(X - B(X))^{\top}(X - B(X)) = U\Sigma U^{\top} = P$. In other words, parametrizing linear scale variations in the shape is equivalent to parametrizing the sample covariance of landmarks, establishing the utility of examining and parametrizing the range of $U$.

\subsubsection{Convergence to Discrete Representatives}

\begin{figure}
    \centering
    \includegraphics[width=0.85\linewidth]{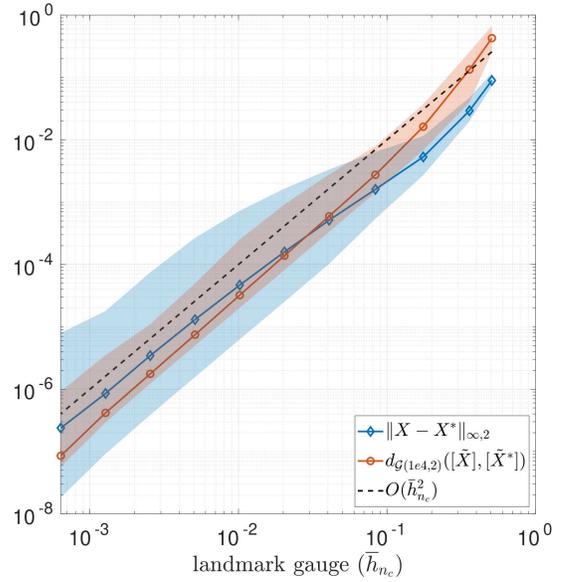}
    \caption{Convergence of $n=10,000$ refinements $X$ and $\tilde{X}$, generated by planar splines reparametrized over \eqref{eq:piecewise_arclengths}, given $n_c$ landmarks as data with corresponding landmark gauge $\overline{h}_{n_c}$. Shaded regions represent a max-min envelope over all $100$ random CST airfoils. Colored curves represent the average Euclidean shape distance and Grassmannian distance (sum of principal angles) between interpolated refinements and true refinements over all random shapes. The dashed black curve depicts the observed quadratic convergence.}
    \label{fig:cst_convergence} 
\end{figure}

The interpretation and use of the Grassmannian for the purpose of describing a topology of discrete shapes is a relatively unique application of the manifold. The motivation stems from the application-driven need for separable representations to study or systematically control distinct affine deformations of known physical importance. There are concerns about how this treatment of discrete shapes may change when subject to reparametrizations as diffeomorphisms, i.e., $\xi:\mathcal{I} \rightarrow \mathcal{I}$ defining new landmarks $\bm{x}_i = \bm{c}(\xi(s_i))$ that constitute the transposed rows of $X$. Modern definitions of shape spaces~\cite{welker2021suitable, Michor2006} are typically defined modulo such diffeomorphisms, and alternative frameworks~\cite{dogan2015fast, joshi2007novel, klassen2004analysis} take advantage of a \emph{pre-shape space} with \emph{square root velocity functions}, inducing a notion of distance between shapes.

In shape design, $\xi$ is often defined in an effort to best identify sequences $(\bm{x}_i)$ that are distributed along the arc-length of the shape with increased concentration around regions of high curvature or are distributed uniformly with high resolution (e.g., $n \geq 1000$ in airfoil design) to improve meshing in a flow solver. We propose fixing $\xi$ such that it induces a specific distribution over chosen parameter $s$ to generate corresponding landmark refinements, $X \in \mathbb{R}^{n\times 2}_*$.

Data sets of discrete shapes rarely share a common choice of generating landmark distribution, particularly if the shapes are gathered from multiple sources. Further, the number of landmarks in the discrete shapes $n_c$ may vary across the data set. We consider a simple \emph{preprocessing} of data before working with tensor representations. This preprocessing proceeds as follows. Given landmark data $(\bm{x}_i)$, we first compute normalized cumulative Euclidean lengths,
\begin{equation} \label{eq:piecewise_arclengths}
s_i = 
    \begin{cases}
        0, \,\, i=1
        \\
        \sum_{k=1}^{i-1}\Vert \bm{x}_{k+1} - \bm{x}_{k}\Vert_2 \Biggl/ \sum_{k=1}^{n_c-1}\Vert \bm{x}_{k+1} - \bm{x}_{k}\Vert_2,
    \end{cases}
\end{equation}
for $i=2,\dots,n_c$. We then interpolate the entries of $(\bm{x}_i)$ over $\xi(s_i)$ to construct a continuous representation $\hat{\bm{c}}(s)$ as an approximation. Finally, we generate a fixed $n$-\emph{refinement} $X \in \mathbb{R}_*^{n \times 2}$ as $n$ (typically greater than $n_c$) landmarks generated by interpolations and LA standardize the $n$-refinement to produce $\tilde{X} \in \mathbb{R}^{n\times 2}_*$. This procedure also offers control over the \emph{landmark gauge}, $\overline{h}_n = \underset{k\in\lbrace 1,\dots,n-1\rbrace}{\text{max}}\lbrace\Vert \bm{x}_{k+1} - \bm{x}_{k} \Vert_2\rbrace$, of the $n$-refinement for subsequent meshing and simulation. This preprocessing motivates the following result:

\begin{lemma} \label{prop:convergence}
Given sparse data $(\bm{x}_i)_{i=1}^{n_c}$ from assumed $\bm{c} \in C^4(\mathcal{I})$, $n$-refinements $X \in \mathbb{R}^{n\times 2}_*$ generated by cubic splines of $(\bm{x}_i)_{i=1}^{n_c}$ with fixed reparametrization $\xi$ converge to elements $[\tilde{X}] \in \mathcal{G}(n,2)$ as $O(h_{n_c}^{4})$ where
$
h_{n_c} = \underset{k\in\lbrace 1,\dots,n_c-1\rbrace}{\text{max}}\vert s_{k+1} - s_{k} \vert
$
is the gauge of an arc-length parametrization.
\end{lemma}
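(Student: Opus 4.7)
The plan is to decompose the convergence argument into three stages: (i) spline interpolation error, (ii) propagation of that error through the centering operation and the thin SVD that defines the LA standardization, and (iii) comparison of the resulting Stiefel representatives as points on $\mathcal{G}(n,2)$. I would start by invoking the classical cubic spline convergence result: for a natural (or clamped) cubic spline $\hat{\bm{c}}$ interpolating $\bm{c}\in C^4(\mathcal{I})$ at the nodes $\xi(s_i)$ with gauge $h_{n_c}$, one has $\|\hat{\bm{c}}-\bm{c}\|_{C(\mathcal{I})}=O(h_{n_c}^4)$. Applying this componentwise to the $n$ refinement parameters yields a matrix bound $\|X-X_\star\|_F = O(\sqrt{n}\,h_{n_c}^4)$, where $X_\star$ denotes the hypothetical refinement generated directly from $\bm{c}$. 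Because $n$ is fixed once the refinement is chosen, this is $O(h_{n_c}^4)$ in the refinement.

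Next I would push the bound through the mean-subtraction $X\mapsto X-B(X)$, which is a bounded linear operator on $\mathbb{R}^{n\times 2}$, and therefore preserves the rate. At this point the task reduces to showing that the map sending a centered full-rank matrix $Y=(X-B(X))^\top$ to the left-singular-vector factor $V$ in its thin SVD $Y=U\Sigma V^\top$ is Lipschitz at $Y_\star$, with the Lipschitz constant depending on the minimum singular value $\sigma_2(Y_\star)>0$. For this step I would appeal to standard perturbation theory for the SVD: Weyl's inequality controls the singular values, and a Wedin-type $\sin\Theta$ bound controls the singular subspaces, giving $\|\sin\Theta(V,V_\star)\|_F \le C\,\|Y-Y_\star\|_F/\delta$ with $\delta$ the spectral gap of $Y_\star$. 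The full-rank hypothesis built into the definition of $\mathbb{R}^{n\times 2}_*$ and the assumption that $\bm{c}$ encloses positive area guarantee a strictly positive gap for $h_{n_c}$ small enough, so the constant is uniform in the refinement.

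Finally I would translate the subspace perturbation into a Grassmannian distance. By Proposition~\ref{prop:proj} the canonical Grassmannian representative of $X$ is $\tilde{X}=V$, and the standard identification of principal angles with $\sin\Theta(V,V_\star)$ gives a bound on the sum of principal angles (the intrinsic geodesic distance on $\mathcal{G}(n,2)$) of the same order. Chaining the three bounds yields $d_{\mathcal{G}}([\tilde{X}],[\tilde{X}_\star])=O(h_{n_c}^4)$, which is the claim. The empirical rate reported in Fig.~\ref{fig:cst_convergence} serves as a sanity check for the constants.

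The main obstacle I anticipate is stage (ii): ensuring that the SVD step does not degrade the rate. Cubic spline convergence is well known and principal-angle bounds from subspace representatives are classical, but the SVD is only Lipschitz away from singular-value collisions. The cleanest route is to argue that, because $\bm{c}$ traces a physical airfoil with nondegenerate covariance, $\sigma_1(Y_\star)$ and $\sigma_2(Y_\star)$ are uniformly separated, so Wedin's theorem applies with a refinement-independent constant. An alternative, which avoids spectral-gap assumptions on the right factor, is to work on the product manifold $\mathcal{G}(n,2)\times S^2_{++}$ via the polar decomposition of Section on the equivalent polar decomposition: the polar factor $P=(Y Y^\top)^{1/2}$ is globally Lipschitz in $Y$ on the set of full-rank matrices, and $\tilde{X}=YP^{-1}$ inherits the $O(h_{n_c}^4)$ rate directly, which then descends to the Grassmannian quotient.
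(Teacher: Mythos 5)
Your proposal is correct and shares its backbone with the paper's proof, but it carries the argument further than the paper does. The paper's proof consists of exactly two moves: (i) uniqueness of the representative $\tilde{X} = (X - B(X))P^{-1}$ via the polar decomposition (so the target element $[\tilde{X}]$ is well defined for full-rank $X$), and (ii) the classical cubic spline bound $\Vert \hat{\bm{c}} - \bm{c} \Vert_{\infty,2} \leq K_4 h_{n_c}^4$ from \cite{birkhoff1964error}, sharpened by \cite{hall1976optimal}, applied row-wise under a fixed reparametrization $\xi$ to bound the matrix difference of $n$-refinements. The paper stops there: the passage of the $O(h_{n_c}^4)$ matrix bound through the LA standardization to an actual Grassmannian distance is left implicit, justified only by the uniqueness of the polar factor. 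Your stages (ii) and (iii)---Wedin-type $\sin\Theta$ perturbation of the right singular subspace and the identification of principal angles with $d_{\mathcal{G}}$---supply precisely the quantitative continuity argument the paper omits, and your observation that the full-rank hypothesis furnishes the needed spectral gap is the right justification. Note also that the paper is careful about landmark correspondence (evaluating $\hat{\bm{c}} = \tilde{\bm{c}} \circ \xi$ at fixed nodes so that rows of $X$ and $X^*$ correspond), which your setup handles implicitly by interpolating at the nodes $\xi(s_i)$; and the paper's worst-case row norm $\Vert \cdot \Vert_{\infty,2}$ avoids your (harmless, since $n$ is fixed) $\sqrt{n}$ factor.

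Two small corrections to your middle stage. First, for the Grassmannian claim the relevant gap in Wedin's theorem is $\sigma_2(Y_\star) - \sigma_3(Y_\star) = \sigma_2(Y_\star) > 0$, i.e., rank nondegeneracy alone; your closing worry about uniform separation of $\sigma_1(Y_\star)$ from $\sigma_2(Y_\star)$ is a red herring here---a $\sigma_1$--$\sigma_2$ collision would only obstruct convergence of individual singular vectors (the specific Stiefel representative), not of the two-dimensional subspace $[\tilde{X}]$, which is all the lemma asserts. Second, the matrix square root $Y \mapsto (YY^{\top})^{1/2}$ is not globally Lipschitz on full-rank matrices; its local Lipschitz constant scales like $1/\sqrt{\lambda_{\min}}$, so your alternative polar route still needs $\sigma_2$ bounded away from zero uniformly in $h_{n_c}$---exactly the same hypothesis as the Wedin route, and the same nondegeneracy the paper's uniqueness step quietly relies on. With those repairs, your chained bound $d_{\mathcal{G}}([\tilde{X}],[\tilde{X}_\star]) = O(h_{n_c}^4)$ is a complete and, if anything, more rigorous version of the published argument.
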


\begin{proof}
By the polar decomposition $(X - B(X))^{\top} = PR$, $P$ is unique and thus $\tilde{X} = R^{\top} = (X - B(X))P^{-1}$ is unique, provided $X$ is full rank, which is true by assumption. Consequently, $[\tilde{X}] = [R^{\top}]$ is unique. 

In the most general case, an interpolation $\tilde{\bm{c}}$ must be evaluated over an appropriately weighted domain to maintain a correspondence of landmarks consistent with a fixed reparametrization $\xi:\mathcal{I} \rightarrow \mathcal{I}$ over arc-lengths $s \in \mathcal{I}$. Otherwise, inconsistent landmarks are encoded in the rows of the $n$-refinement. We argue convergence over the worst-case error across rows of $X$ computed with $\hat{\bm{c}} = \tilde{\bm{c}} \circ \xi$, necessitating a choice and/or construction of fixed $\xi$ with preimage as arc-lengths $s \in \mathcal{I}$. We define
$$
\Vert \bm{e} \Vert_{\infty,2} = \underset{s}{\text{max}} \lbrace \Vert \bm{e}(s) \Vert_2\,:\,s\in \mathcal{I}\rbrace
$$
as the \emph{worst-case} Euclidean norm over any row encoded in a corresponding matrix. For fixed $\xi$ and $h_{n_c}$ as the gauge of the $n_c$-mesh, it follows that the worst-case row converges for decreasing $h_{n_c}$ by applying results from \cite{birkhoff1964error} (sharpened by \cite{hall1976optimal}), asserting
$$
\Vert \hat{\bm{c}} - \bm{c} \Vert_{\infty, 2} \leq K_4h_{n_c}^4
$$
which bounds an equivalent matrix norm applied to a corresponding difference in $n$-refinements. The aforementioned result derives constant $K_4$ which depends on the largest fourth derivatives of the component functions of $\bm{c}$.
\end{proof}

We experiment with the common CST representation of airfoils to generate $100$ random airfoil shapes according to a  cosine domain distribution, resulting in a nonuniform (over arc-length) distribution of landmarks.\footnote{The cosine distributed landmarks are a common choice for airfoil designers because landmarks concentrate around the ``leading edge'' and ``trailing edge'' features of an airfoil. In this case, it reflects an intuitive choice of nonuniform landmark distribution.} The CST representation utilizes $18$ total coefficients ($9$ upper surface coefficients and $9$ lower surface coefficients), each uniformly sampled over $[0,0.45]$. However, the CST shapes are represented by a partition into upper and lower surface thus potentially compromising the assumed differentiability. Moreover, in general, we often do not have true arc-lengths along the shape as data. Instead, we supplement by using \eqref{eq:piecewise_arclengths} as a parametrization of spline(s) with corresponding landmark gauge. We numerically study the effects of abusing the underlying assumptions of the theory to demonstrate empirically that our necessary preprocessing of shapes remains convergent.

Given a known CST shape and an appropriate (design-informed) sampling scheme, we compute a refinement at $n=10,000$, defining LA-standardized shape $\tilde{X}^*$ with corresponding cumulative Euclidean lengths $( s^*_i )_{i=1}^n$ inferred from $X^*$ (prior to LA standardization). Then, we sparsely resample the CST representation with the same sampling scheme for $n_c \leq n$
and provide the sparse landmarks to a cubic spline interpolation routine parametrized over sparse $n_c$ using \eqref{eq:piecewise_arclengths}. The spline approximation is then used to generate LA standardized refinements $\tilde{X}$ up to the corresponding $n=10,000$ according to $\xi(s^*_i)$---e.g., $\xi$ is the identity for the planar spline built from landmark entries over $( s_i )_{i=1}^{n_c}$, while alternative splines may utilize $\xi$ as a PCHIP such that $\tilde{s}_i = \xi(s_i)$ correspond to \eqref{eq:cummulative_LA_lengths} for the same $n_c$-mesh. The nature of the spline interpolation over cumulative Euclidean lengths as an approximation induces an error that manifests as a Euclidean shape distance, $\Vert X - X^* \Vert_{\infty,2} = \underset{k\in\lbrace 1,\dots,n\rbrace}{\text{max}}\lbrace\Vert \bm{x}_{k} - \bm{x}^*_{k} \Vert_2\rbrace$, or as the sum of principal angles between $n$-refinement $[\tilde{X}]$ and $[\tilde{X}^*]$ as a distance over the Grassmannian, $d_{\mathcal{G}(n,2)}([\tilde{X}],[\tilde{X}^*])$. Evidence is depicted in Fig.~\ref{fig:cst_convergence}. 

Increasing $n_c$ landmarks with reduced landmark gauge provided as input data improve accuracy in the separable tensor representations. The observed order of convergence---specifically for the chosen random CST airfoils as a particularly relevant class of shapes---to known representative elements on the Grassmannian is reduced from the result of the Lemma~\ref{prop:convergence}. However, we observe that the preprocessing procedure still offers a relatively fast order of convergence over a class of relevant shapes.

\subsubsection{Shortcomings}
The proposed separable representations \eqref{eq:sep_shape} and \eqref{eq:polar_sep} are not without drawbacks. In particular, physically relevant shapes $\bm{c}$ are nonintersecting, with the possible exception of closed curves, which necessarily coincide at the endpoints, i.e., embeddings~\cite{welker2021suitable}. Our current formalism could violate this. Specifically, the discrete separable representations $(\pi^{-1} \circ [\tilde{X}])(\bm{t},\cdot)$ can generate self-intersections in continuous reconstructions for \emph{large changes} in parameters $\bm{t}$. We mitigate these concerns by preventing extrapolation beyond embeddings (implicit to the data) using a numerical routine to check for a piecewise linear intersection condition (accurate for sufficiently large $n$). We hypothesize that constraining against significant extrapolation, with an improved notion of distance, beyond a set of discrete nonintersecting shapes as data is sufficient to protect against generating self-intersections---this is supplemented by empirical evidence. Future work will focus on improved constraints or continuous analogues avoiding self-intersection to explicitly constrain to a set of embedded curves.

\subsection{Continuous Analogues \& Comparisons}
We unpack an interpretation of the LA standardized shapes as they relate to orthogonal polynomials. Thus extending discrete representations to continuous forms. We conclude with comparisons to alternative AI/ML frameworks for generative modeling of discrete representatives. These discussions set the stage for future work and possible numerical comparisons.

We explore building continuous analogues from the discrete shapes by approximating a so-called \emph{quasimatrix}---i.e., a 2D array that is discrete along one dimension and continuous along the other~\cite{townsend2015continuous}. The basic idea is to explore a procedure for computing orthogonal functions given two discretizations (one per column) encoded in the rows of $V$. Specifically, we seek interpolations for the columns of the LA standardized shape $V = \tilde{X}$ (see proof of Prop.~\ref{prop:proj}) such that 
\begin{equation} \label{eq:discrete_kernel}
    (X - B(X))(X - B(X))^{\top}\bm{v}_j = \sigma_j^2\bm{v}_j, \qquad j = 1,2,
\end{equation}
where $\sigma_1,\sigma_2$ are the diagonal entries of $\Sigma$ in \eqref{eq:LA_M}. In this interpretation, \eqref{eq:discrete_kernel} is a discrete analogue of the Fredholm integral equation
\begin{equation} \label{eq:cont_kernel}
    \int_{\mathcal{I}}k_{\langle \bm{c}, \bm{c}\rangle}(s', s)v_j(s')d\mu(s') = \sigma_j^2v_j(s)
\end{equation}
for some measure $\mu$. 

We present \eqref{eq:discrete_kernel} to motivate an interpretation of the columns of $V$ as they relate to \eqref{eq:cont_kernel}. In this interpretation, the columns of $V$---$\bm{v}_j$ for $j=1,2$---represent discretizations of two $L^2(\mathcal{I})$ orthonormal eigenfunctions $v_j(s_i)$ for $s_i \in \mathcal{I}$ with ordering $s_1 < s_2 <\dots < s_n$. These eigenfunctions constitute coordinate functions for an LA-standardized planar curve $\tilde{\bm{c}}(s) = (v_1(s), v_2(s))$. The Mercer kernel $k_{\langle \bm{c}, \bm{c}\rangle}$ is the Euclidean inner product of the centered planar curve with itself, akin to entries of a Gram matrix but distinct from landmark sample covariance. This choice is consistent with the ``$A=0$'' metric discussed in supporting work~\cite{Schulz2014, Michor2006, joshi2007novel}. Although this choice of metric admits a pathology in the continuous framework~\cite{Michor2006}, it may still be worthwhile to consider in a discrete setting~\cite{Schulz2014, joshi2007novel}. This interpretation establishes the utility of examining and parametrizing the range of $V$. Additionally, this interpretation motivates how we may modify our implicit choice of kernel and/or shape metric for future applications.

Next, we inform continuous reconstructions using interpolation of discrete data while retaining the desired separability in deformations. To begin, we compute $V$ via the SVD of centered $X$. Then, we consider the normalized cumulative Euclidean length along the discrete curve $V = \tilde{X} = (\tilde{\bm{x}}_1,\dots, \tilde{\bm{x}}_n)^{\top}$, over which we will construct our interpolation,
\begin{equation} \label{eq:cummulative_LA_lengths}
    \tilde{s}_i = \begin{cases}
    0, \,\, i=1\\
    \sum_{k=1}^{i-1}\Vert \tilde{\bm{x}}_{k+1} - \tilde{\bm{x}}_{k}\Vert_2 \Biggl/ \sum_{k=1}^{n-1}\Vert \tilde{\bm{x}}_{k+1} - \tilde{\bm{x}}_{k}\Vert_2,\,\,
    \end{cases}
\end{equation}
for $i=2,\dots,n$. This results in pairs $\lbrace (\tilde{s}_i, v_{ij}) \rbrace$ for $j=1,2$, where $v_{ij}$ is the $ij$ entry of $V$. We can construct barycentric Lagrange interpolation over these data pairs, defined by
\begin{equation}
    v_j(\tilde{s}) = \sum_{i=1}^n \frac{w_i}{\tilde{s} - \tilde{s}_i}v_{ij} \Biggl/\sum_{i=1}^n \frac{w_i}{\tilde{s} - \tilde{s}_i},
\end{equation}
where weights are given by
$
w_i = 1/\prod_{k\,\neq\,i}(\tilde{s}_i - \tilde{s}_k)
$
for all $\left. i=1,\dots,n \right.$~\cite{berrut2004barycentric, higham2004numerical}. Alternatively, we could employ piecewise interpolating splines with conditions designed for improved \emph{fairness}~\cite{sapidis1994designing} to build the two curves. Alternative interpolations include regularized cubic splines (clamped, natural, or periodic), piecewise cubic Hermite interpolating polynomial (PCHIP) splines, B-splines, nonuniform rational basis splines (NURBS), Hicks-Henne bump functions, or radial basis functions. The results, in any case, are two functions $v_1(\tilde{s})$ and $v_2(\tilde{s})$, that interpolate pairs $\lbrace(\tilde{s}_i,v_{ij})\rbrace$ for all $i=1,\dots,n$ with $j=1$ or $j=2$, respectively. Concatenating $(v_1(\cdot), v_2(\cdot))$ into the columns of $V(\cdot)$, the interpolations defining $V(\cdot)$ are no longer necessarily orthogonal but retain some nice (often subjective, yet prescriptive) design characteristics. However, despite the choices defining the designed continuous reconstruction $V(\cdot)$, the two interpolations can be evaluated uniformly to induce a fixed reparametrized integral measure $d\tilde{\mu}(\tilde{s}) = d\tilde{s}$ along the curve, which can then be projected onto a space of orthonormal Legendre polynomials\footnote{In this case, the arbitrary closed interval $\mathcal{I}$ is reparametrized to $[-1,1]$ in contrast to the choice of $[0,1]$ in \eqref{eq:cummulative_LA_lengths}.} via a QR-decomposition of the prescribed $\infty \times 2$ quasimatrix~\cite{trefethen2010householder}---i.e., $V(\cdot) = Q(\cdot)R$ for $Q$ an $\infty \times 2$ quasimatrix and $R$ a $2\times 2$ upper triangular. The quasimatrix $Q(\cdot)$ from the QR-decomposition becomes the continuous analogue that satisfies the orthonormal constraint of the representative discrete shape $\tilde{X}$, which is defined by a relevant choice of LA standardized landmark data interpolation, $V(\cdot)$.

If desired, the Legendre polynomials as columns of $Q(\cdot)$---with sufficient differentiability over $\tilde{s}$---can be used to compute unit tangent and normal vectors of the airfoil shapes as well as curvature after right multiplication with corresponding scale variations. The continuous representation can then be expressed as
\begin{equation} \label{eq:GL2_continuous_analouge}
    C(\tilde{s};\bm{a}, \bm{\ell}) = Q(\tilde{s}; \bm{a})M(\bm{\ell})
\end{equation}
or
\begin{equation} \label{eq:SPD_continuous_analouge}
    C(\tilde{s};\bm{a}, \bm{\ell}) = Q(\tilde{s}; \bm{a})P(\bm{\ell})
\end{equation}
parametrized by a vector of polynomial coefficients $\bm{a}$ and scale (length) variations $\bm{\ell}$ over the respective choice of manifold, generally for $M(\bm{\ell}) \in GL_2$ or specifically for $P(\bm{\ell}) \in S_{++}^2$. In other words, an interpolation approximating the continuous reconstruction of the curve is $\tilde{\bm{c}}(\tilde{s}; \bm{a}, \bm{\ell}) = (q_1(\tilde{s}; \bm{a}_1, \bm{\ell}), q_2(\tilde{s}; \bm{a}_2, \bm{\ell}))$, where $q_i$ are parametrized linear combinations (over variations in $\bm{\ell}$) of two orthogonal Legendre polynomials constituting the columns of $Q(\cdot)$ with coefficients $\bm{a} = (\bm{a}_1, \bm{a}_2)$ over the continuous dimension of $C(\cdot; \bm{a}, \bm{\ell})$. In this context, $(\bm{a}_1, \bm{a}_2)$ represents a partition of the full set of coefficients $\bm{a}$ into corresponding component functions $(q_1, q_2)$ of the curve.

\subsubsection{AI/ML Comparisons and Interpretations}
We consider alternative approaches that leverage AI-based tools for dimension reduction and generative modeling, such as autoencoders or variational autoencoders (VAEs)~\cite{Kingma:2013,Kramer:1991} and generative adversarial networks (GANs)~\cite{Goodfellow:2014}. Autoencoder models learn nonlinear reduced representations by mapping input data through a so-called information bottleneck (i.e., the latent space) before reconstructing it. The basic architecture of these models is the composition of an encoder $\pi_{enc}$ with a decoder $\pi_{dec}$, with each component part comprised of multiple neural processing layers. Given a training data set $\lbrace X_k \rbrace$, we fit the model parameters $\bm{w} \in \mathbb{R}^D$ by minimizing some reconstruction loss
\begin{equation} \label{eq:VAE}
    \underset{\bm{w} \in \mathbb{R}^D}{\text{minimize}}\, \sum_{k}\Vert X_k - (\pi_{dec} \circ \pi_{enc})(X_k; \bm{w}) \Vert .
\end{equation}
VAEs expand on traditional autoencoders by encouraging desirable distributions on the latent variables by adding a term such as the Kullback-Leibler divergence $KL(\pi_{enc} (X_k; \bm{w}) || \rho(\bm{z}))$, where $\rho(\bm{z})$ is the target latent space distribution.

GANs are similarly comprised of two neural network models; however, unlike with autoencoders, these models are trained against each other. The generator network $\pi_{gen}$ maps random latent vectors $\bm{z} \sim \rho$ to outputs in the space of the training data. The discriminator network $\pi_{disc}$ maps data to $[0,1]$ corresponding to a probabilistic prediction that the input data came from the generator or the training data. The network parameters are fit according to a minimax optimization of the training data,
\begin{align} \label{eq:GAN}
    \underset{\bm{w}_{gen} \in \mathbb{R}^{D_{gen}}}{\text{minimize}}\,&\underset{\bm{w}_{disc} \in \mathbb{R}^{D_{disc}}}{\text{maximize}}\, \sum_{k} \log \pi_{disc} (X_k; \bm{w}_{disc}) \\
    &+ \sum_{k}\log \left( 1 - \pi_{disc} (\pi_{gen}(X_k; \bm{w}_{gen}); \bm{w}_{disc}) \right) . \nonumber
\end{align}
This optimization ultimately encourages the generator to draw plausible samples from the training data distribution. The generator from GANs and the decoder from VAEs can both be used to map low-dimensional, latent space parametrizations to new realizations. Both methods have been applied to the design of airfoil shapes~\cite{Yonekura:2021,Yang:2022,Chen:2019,Wang:2022,Achour:2020}.

In practice, we seek a robust and interpretable parametrization for the decoder/generator model to act upon. For VAEs, this corresponds to $\pi_{enc}$ being surjective onto the latent space. However, this property can be difficult to guarantee in general. In contrast, the proposed tensor representations are supported by Props.~\ref{prop:surjective}--\ref{prop:proj} and Lemma~\ref{prop:convergence}. The implications of ambiguous properties on the VAE latent space are ill-posed constructions and parametrizations that are highly dependent on the stochastic training process for the model. Consequently, the resulting parametrizations are often overfit to the specific data types and sets used during training. Although the targeted loss terms can encourage desirable behavior, these results are not guaranteed. Despite this, geometric interpretations have led to a set of boundary-value problems defined by the pullback metric (geodesic equations) to offer improved notions of distances between points in latent space~\cite{arvanitidis2017latent}. Such innovations and interpretations are crucial in the continued development of VAEs.

GANs benefit from a well-defined parameter space, as the distribution over the latent variables is decided prior to network training. However, it can be difficult to control the relationship between the latent variables and their resulting shape generations, leading to poor interpretability and little insight into the intrinsic dimension of the shape parametrization. Similar to the VAEs, the resulting parametrization is heavily influenced by the randomized network initialization and training procedure. Furthermore, although the adversarial training encourages the generation of quality shapes, issues such as mode collapse and overparametrization of the networks may cause the generator to miss key novel shape designs that drive innovation. Lastly, the computational burden---and the associated energy costs---required to train such sophisticated models is considerable~\cite{Strubell:2019}. 

In comparing these approaches to this work, the decoder $\pi_{dec}$ from VAEs and the generator $\pi_{gen}$ from GANs may be considered as analogues of the right inverse $\pi^{-1}$, which constitute a parametrization over a \emph{local section of the fiber bundle}~\cite{lee2006riemannian}. Exploring this connection further, the analogous network latent spaces will become \emph{normal coordinates}, defined in~\cite{lee2006riemannian}, over ``parent'' matrix manifolds that generate separable shape deformations in our context. These normal coordinates constitute a set of naturally defined parameters describing general manifold topologies, as opposed to the obscure latent space emulating a target distribution.

In Section~\ref{subsec:riemann_views}, we describe how our principled separable representations \eqref{eq:sep_shape} and \eqref{eq:polar_sep} offer improved geometric interpretations with significantly reduced computational cost. In particular, our principled approach to shape representation takes advantage of rigorously studied matrix manifolds and linear algebra to avoid the need for general nonconvex optimization and numerical integration of boundary-value problems when computing geodesics and distances over latent spaces. Additionally, we assert additional geometric interpretations beyond geodesics and distances that enable novel deformations of 3D shapes---a means of interpolating and applying consistent deformations to distinct 2D shapes. The result is an analytic generative model from a learned (data-driven) manifold of shapes.

\subsection{Riemannian Interpretations} \label{subsec:riemann_views}
We formally develop a data-driven framework for parametrizing elements over topologies of separable shape tensors, which leverage an extension of principal component analysis to Riemannian manifolds. This requires a pair of fundamental intrinsic maps for mapping between a given manifold and a tangent space at a central element. We also discuss an improved notion of distance as lengths of geodesic curves over the manifold. Lastly, we present parallel transport as a method for applying consistent deformations to different shapes---motivating a novel approach to deform 3D blades. These interpretations are backed by sections detailing algorithms to compute all necessary maps over the presented manifolds.

\begin{figure*}
     \centering
     \begin{subfigure}{0.32\textwidth}
         \centering
        \includegraphics[width=0.95\linewidth]{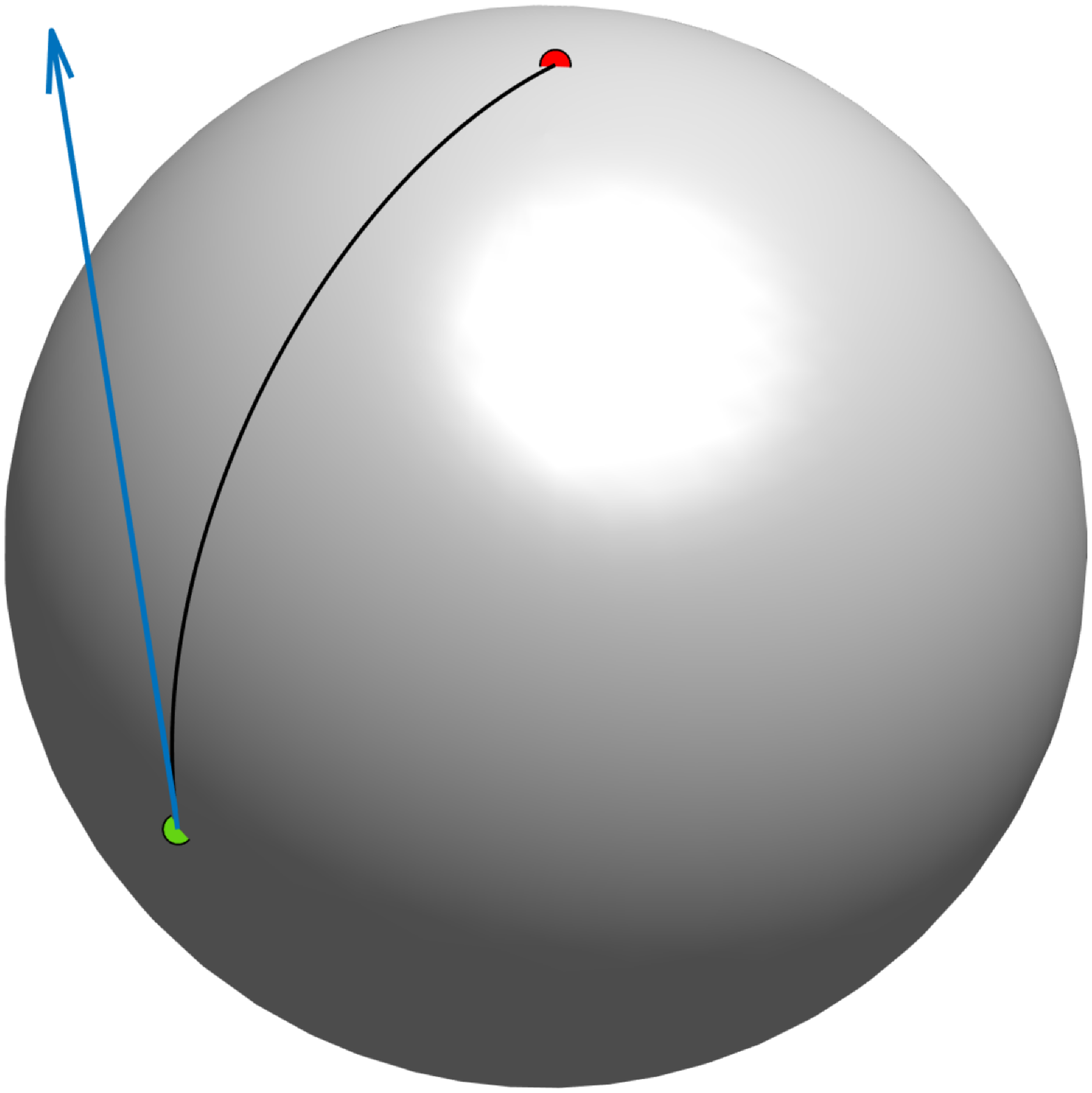}
        \caption{Intrinsic maps over the $2$-sphere}
        \label{fig:intrinsic_maps}
     \end{subfigure}
     \hfill
     \begin{subfigure}{0.32\textwidth}
         \centering
        \includegraphics[width=1\textwidth]{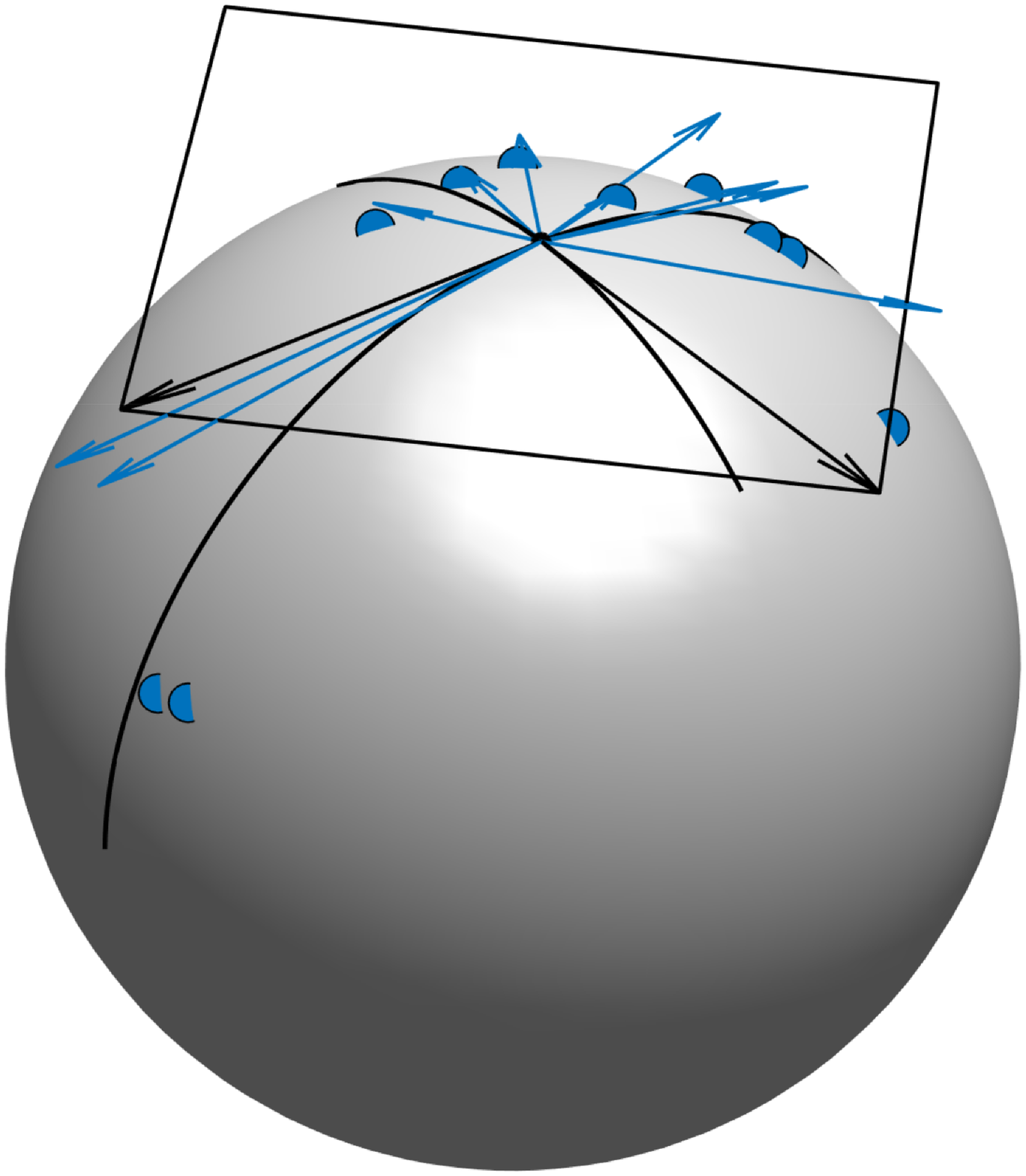}
        \caption{Principal geodesic analysis over the $2$-sphere}
        \label{fig:PGA_sphere}
     \end{subfigure}
     \hfill
     \begin{subfigure}{0.32\textwidth}
         \centering
        \includegraphics[width=1\textwidth]{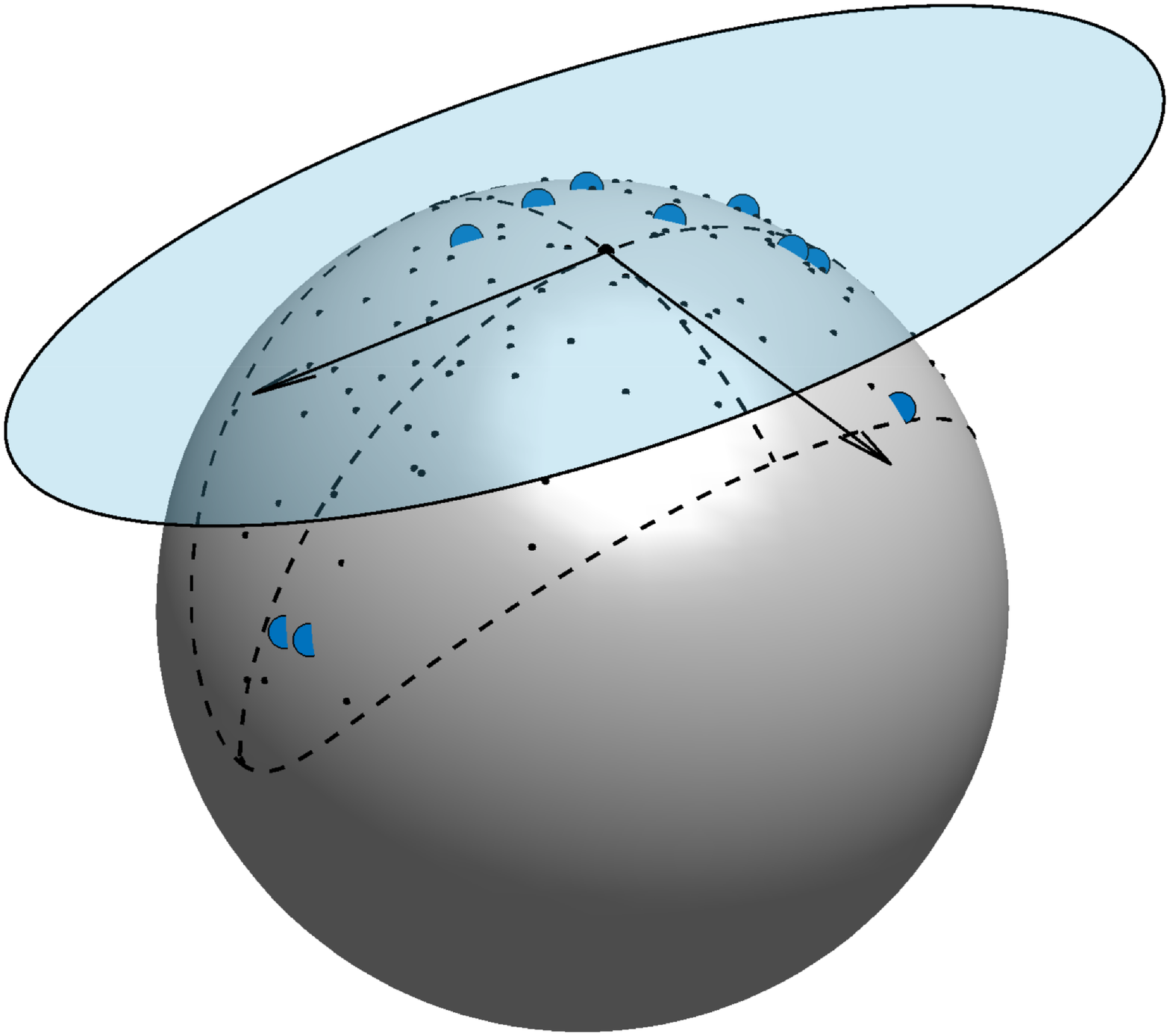}
        \caption{Geodesic ellipse over the $2$-sphere}
        \label{fig:nml_coords}
     \end{subfigure}
        \caption{Visualizations of data-driven methods over the $2$-sphere: (a) A visualization of the exponential $\text{Exp}_{\textcolor{green}{A}}(\textcolor{blue}{\Delta}) = \textcolor{red}{B}$ and inverse exponential $\text{Log}_{\textcolor{green}{A}}(\textcolor{red}{B}) = \textcolor{blue}{\Delta}$ maps over the sphere for points $\textcolor{green}{A},\,\textcolor{red}{B} \in \mathcal{M}$ and corresponding tangent vector $\textcolor{blue}{\Delta} \in T_{\textcolor{green}{A}}\mathcal{M}$. (b) Principal geodesic analysis (PGA) over the two-sphere embedded in $\mathbb{R}^3$. The large black dot represents the intrinsic mean of the blue dots, which are provided as data. The black arrows represent PGA directions in the central tangent space at the intrinsic mean. The black lines represent geodesics along the PGA directions. (c) A geodesic ellipse and grid (dashed boundary) lifted to the normal coordinate neighborhood (light blue region with solid boundary). Smaller black dots are new samples generated uniformly over the submanifold spanned by both directions---constituting a generative model.}
        \label{fig:Riemann_spheres}
\end{figure*}
In Section~\ref{subsec:affine}, we demonstrated how to define airfoil representations that separate important aerodynamic scale variations from higher-order undulations in the shape. The methods for LA standardization, namely polar decomposition, map discrete airfoil shapes $X$ to representative elements of the Grassmannian $\tilde{X}$. Here, we provide a data-driven approach to parametrizing these separable shapes that leverages rigorously designed data sets of airfoil coordinates~\cite{UIUC} and/or systematically engineered airfoil expansion representations~\cite{kulfan2008universal}. The goal is to use ensembles of existing, well-designed shapes $\lbrace X_k\rbrace$ to construct the separable forms in \eqref{eq:sep_shape} or \eqref{eq:polar_sep}. To do so, we perform a statistical analysis of a given ensemble of discrete shapes factored into the separable forms. This constitutes a data-driven perspective for reparametrizing shapes via separable tensors.

By introducing separability, we must consider the Riemannian geometry of $\mathcal{G}(n,2)$ and $S_{++}^2$, or possibly $GL_2$, to parametrize our discrete shape space. In general, we consider any as a smooth manifold $\mathcal{M}$ such that the Riemannian manifold is defined as $(\mathcal{M},g)$ for some choice of metric $g$ that induces an inner product $g_p:T_p\mathcal{M} \times T_p\mathcal{M} \rightarrow \mathbb{R}$ over the tangent space $T_p\mathcal{M}$ for some $p\in \mathcal{M}$. Note that in this abstracted sense, $p$ is a representative matrix element of the smooth manifold, whereas $\bm{v} \in T_p\mathcal{M}$ are generalized notions of \emph{directions} in a tangent space at $p$ but are also ultimately represented as matrices. 

Building submanifolds from parent matrix manifolds offers several advantages over alternative approaches, including modern AI/ML methods discussed previously. This perspective supplements a novel treatment for learning a non-Euclidean manifold of discrete shapes from data, namely by Prop.~\ref{prop:surjective} and Prop.~\ref{prop:proj}. This supplements a more prudent notion of distance between discrete shapes and their separable deformations. Shapes in this framework are defined by \emph{explainable} and \emph{interpretable} representations induced by parent matrix manifolds. That is, there is no obscurity about our explicit parametrizations defining submanifolds of discrete shapes---an argument which is very difficult to develop in general for alternative frameworks. Moreover, the geometry of the parent manifolds is largely understood and supported by robust theoretical foundations. The advantage is that each computation involved in defining parameters of the submanifold can be explained using the interpretations of linear algebra and Riemannian geometry. These explanations and interpretations are transparent and supplement a more precise critique of potential flaws in previous methods and applications. 

The goal is to use a given ensemble of discrete airfoils to infer submanifolds of either $\mathcal{G}(n,2) \times GL_2 \subset \mathbb{R}^{n \times 2}_*$ or $\mathcal{G}(n,2) \times S_{++}^2 \subset \mathbb{R}^{n \times 2}_*$. Over the former, parametrizations through $GL_2$ are often prescribed by the design problem under consideration---e.g., using \eqref{eq:eg_GL2} to enforce structural/regulatory constraints dictating that a specified airfoil thickness at a given location along a wind turbine blade. In this case, scale variations and rotations are prescribed such that a fixed or constrained $M(\bm{\ell})$ is paired with unknown $\bm{t}$. Hence, the geometry of $GL_2$ is largely inconsequential for aerodynamic design given explicit definitions of $M(\bm{\ell})$, and only inferences involving $\mathcal{G}(n,2)$ are required. For $\mathcal{G}(n,2) \times S_{++}^2$, a more flexible representation of airfoil shapes is offered, independent of rotations/reflections. This is most useful for more general representations of airfoil shapes if scale variations independent of rotations are not prescribed---i.e., unknown $(\bm{t}, \bm{\ell})$. This insight provides a blueprint for leveraging different representations of airfoil shapes. That is, if you understand scale variations (e.g., traditional blade design), then compute $M(\bm{\ell})$ explicitly and pair it with inferred statistical properties of $[\tilde{X}](\bm{t})$ from data projected onto $\mathcal{G}(n,2)$ to parametrize shape deformations using~\eqref{eq:sep_shape}. Otherwise, attempt to infer statistical properties for pairs of $[\tilde{X}](\bm{t})$ and $P(\bm{\ell})$  over $\mathcal{G}(n,2) \times S^2_{++}$, and parametrize shape deformations using~\eqref{eq:polar_sep} for more generalized airfoil design.

\subsubsection{Intrinsic Maps}
There are two important intrinsic maps of Riemannian manifolds that are used to parametrize data-driven submanifolds. First, we require the exponential map $\text{Exp}_{p}:T_{p}\mathcal{M} \rightarrow \mathcal{M}$ that parametrizes an initial value problem for geodesic trajectories along the respective manifold $\mathcal{M}$ beginning at $p\in \mathcal{M}$. Specifically, a geodesic curve beginning at $p$ is parametrized over a direction $\bm{v}\in T_p\mathcal{M}$ as $\bm{\gamma}(t;p, \bm{v}) = \text{Exp}_p(t\bm{v})$ such that $\bm{\gamma}(0;p,\bm{v}) = p$. Second, we require the corresponding inverse $\text{Log}_{p}: \mathcal{M} \rightarrow T_{p}\mathcal{M}$ which parametrizes a boundary-value problem given two points connected by a geodesic trajectory over the respective $\mathcal{M}$. Fig.~\ref{fig:intrinsic_maps} shows an example visualization of the $\text{Exp}$ and $\text{Log}$ maps over a $2$-sphere. Compositions involving these maps with a corresponding basis in a fixed tangent space $T_{p}\mathcal{M}$ define \emph{normal coordinates} of the manifold~\cite{lee2006riemannian}. We leverage reduced-dimension subspaces of these normal coordinates to parametrize discrete shape submanifolds from data. 

\subsubsection{Distances}
Distance over a Riemannian manifold $(\mathcal{M}, g)$ is defined as the infimum over a set of lengths---i.e., line integrals of the inner product of curve velocities induced by $g$---connecting two points in the manifold. Per the developments of~\cite{lee2006riemannian}, this definition of distance is consistent with that of zero acceleration curves (or geodesics) over the manifold such that these curves have a length-minimizing property. Consequently, geodesics induce an intuitive notion of shortest-distance curves whose lengths inform a metric over geodesic balls. These geodesic balls are defined as the image of $\text{Exp}$ at a point over an open or closed ball in normal coordinates, as visualized in Fig.~\ref{fig:nml_coords}. When restricted to a geodesic ball over the manifold, these geodesic distances are equivalently represented by the norm over the image of the $\text{Log}$ map~\cite{lee2006riemannian},
\begin{equation} \label{eq:dist}
    d_{\mathcal{M}}:\mathcal{M} \times \mathcal{M} \rightarrow \mathbb{R}_+:(p,z)\mapsto \Vert \text{Log}_p(z) \Vert_g,
\end{equation}
with the norm induced by the metric $g$ at the corresponding point. In our case, the norm in \eqref{eq:dist} is the Frobenius norm, inducing $d_{\mathcal{G}}$ and $d_{S^2_{++}}$ with corresponding $\text{Log}$ for the respective manifolds. In our data-driven setting, we assume data is concentrated within geodesic balls to leverage \eqref{eq:dist} for computing distances, given the desire to compute $\text{Log}$ and $\text{Exp}$ maps that define normal coordinate charts. However, the assumed restriction to geodesic balls is more generally extensible by constructing an atlas of normal coordinate charts using a collection of disjoint tangent spaces.

\subsubsection{Riemannian Statistics}
We next introduce two algorithms that extend the fundamental data analysis technique of principal component analysis (PCA) to Riemannian manifolds as a so-called principal geodesic analysis (PGA)~\cite{pennec1999probabilities,fletcher2003statistics}. The output of PGA will inform a parametrization over the Grassmannian $\mathcal{G}(n,2)$, which defines the higher order perturbations to airfoil shapes. The basic premise of classical PCA is that, given a set of points as data, we can determine directions that maximize sample covariance $G$ of the point set,
\begin{equation}
    \underset{\bm{v} \, \text{s.t.}\, \bm{v}^{\top}\bm{v} = 1}{\text{argmax}} \, \bm{v}^{\top}G\bm{v}.
\end{equation}
Writing the Lagrangian and solving for the stationarity condition in the optimization problem implies $(G - \lambda I)\bm{v} = 0$ for strictly nonnegative $\lambda$. Thus, examining the decreasing ordered pairs $(\lambda_i, \bm{v}_i)$ of eigenvalues and eigenvectors offers a set of directions defining a basis for a reduced-dimension subspace over which the covariance-weighted inner product changes the most, on average. 

We require two functionalities to extend PCA to Riemannian manifolds:  (i) we must \emph{center on the data} by an appropriate notion of an \emph{intrinsic mean} over the manifold; then, (ii) we must \emph{identify important directions} over the manifold that form an orthogonal frame for a \emph{submanifold}. Fig.~\ref{fig:PGA_sphere} offers a useful visual analogue for the case of a sphere. For completeness, we present the two algorithms in detail as Algorithm~\ref{alg:Karcher} and Algorithm~\ref{alg:PGA}. Algorithm~\ref{alg:Karcher} computes the intrinsic (Karcher or Fr\'{e}chet) mean by computing the mean of the data in the tangent space of an iterative approximation restricted over the manifold. Algorithm~\ref{alg:PGA} computes the corresponding PGA directions in the central tangent space using the SVD of the data lifted to the tangent space at the intrinsic mean computed from the previous algorithm. Together, these algorithms provide a framework for parametrizing separable shape tensors. Additional motivation and development for these algorithms can be found in~\cite{fletcher2003statistics, fletcher2004principal, pennec1999probabilities}.

\begin{algorithm}[!h]
\caption{Intrinsic mean over Riemannian manifold}
\label{alg:Karcher}
    \begin{algorithmic}[1]
        \Require Data points $\lbrace p_k \rbrace_{k=1}^N \in \mathcal{M}$ and chosen convergence threshold $\epsilon > 0$
        
        \State Set $p' = p_1$ and initialize $\bm{v}$ such that $\Vert\bm{v}\Vert > \epsilon$
        
        \While{$\Vert \bm{v} \Vert \geq \epsilon$}
            
            \State$\bm{v} = \frac{1}{N} \sum_{k=1}^N\text{Log}_{p'}(p_k)$
            \State$p' = \text{Exp}_{p'}(\bm{v})$
        
        \EndWhile
        
        \noindent
        \Return $p_0 = p' \in \mathcal{M}$
    \end{algorithmic}
\end{algorithm}

\begin{algorithm}
\caption{Principal geodesic analysis directions}
\label{alg:PGA}
    \begin{algorithmic}[1]
    \Require Data points $\lbrace p_k \rbrace_{k=1}^N \in \mathcal{M}$, chosen convergence threshold $\epsilon > 0$, and dimensionality $r$
    
    \State Compute $p_0$ according to Algorithm~\ref{alg:Karcher} with parameter $\epsilon$
    \State Compute matrix
    $$
    \Delta_N = \frac{1}{\sqrt{N-1}}\left( \text{Log}_{p_0}(p_1), \dots, \text{Log}_{p_0}(p_N)\right)
    $$
    \State Compute reduced SVD, up to $r\leq \text{dim}(\mathcal{M})$,
    $$
    \Delta_N = U_r\Sigma_rV_{r}^{\top}
    $$
    \State Compute normal coordinates in the new basis,
    $$
    T = U_r^{\top}\Delta_N = \Sigma_r V_r^{\top}
    $$
    \Return $U_r$ with columns as the $r$ directions in $T_{p_0}\mathcal{M}$ and $T$ as the matrix of \emph{principal normal coordinates}
    \end{algorithmic}
\end{algorithm}

In Algorithm~\ref{alg:Karcher}, the choice of norm $\Vert \cdot \Vert$ over the tangent space in the case of matrix manifolds is taken as the Frobenius norm---induced by the choice of ambient inner product, $\text{tr}(A^{\top}B)$. For Algorithm~\ref{alg:PGA} applied to matrix manifolds, step 2 requires $\text{vec}:\mathbb{R}^{n \times 2} \rightarrow \mathbb{R}^{2n}$, which stacks the columns of the matrices such that the $k$-th columns of $\Delta_N$ are replaced with $\text{vec}(\text{Log}_{p_0}(p_k))$. This does not modify the induced metric implicit to Algorithm~\ref{alg:PGA}---which maximizes an \emph{approximated} notion of a covariance-weighted inner product~\cite{fletcher2003statistics}---since $\text{tr}(A^{\top}B) = \text{vec}(A)^{\top}\text{vec}(B)$. Consequently, the left singular vectors of $\Delta_N$ still correspond to principal directions in $T_{p_0}\mathcal{M}$ that maximize sample covariance in normal coordinates proportional to $\Delta_N^{\top}\Delta_N$. When mapping forward through these important directions, which are defined by the left singular vectors as columns of $U_r$ to parametrize a submanifold, we must reshape once more prior to composition with $\text{Exp}_{p_0}$. Defining $\text{vec}^{-1}:\mathbb{R}^{2n} \rightarrow \mathbb{R}^{n\times 2}$ such that $\text{vec}^{-1}(\text{vec}(A)) = A$, the $r$-dimensional matrix submanifold of interest becomes
\begin{equation}
    \lbrace p \in \mathcal{M}\,:\, p = \text{Exp}_{p_0}(\text{vec}^{-1}(U_r\bm{t})) \rbrace
\end{equation}
for a vector of PGA coefficients $\bm{t} \in \mathbb{R}^r$ and basis $U_r \in \mathbb{R}^{2n \times r}_*$. Once again, despite the composition with reshaping, the choice of ambient inner product remains consistent for $\bm{v} = U_r\bm{t}$, such that $\text{tr}(\text{vec}^{-1}(\bm{v})^{\top}\text{vec}^{-1}(\bm{v})) = \bm{v}^{\top}\bm{v}$.

\subsubsection{Consistent Deformations}
\label{subsec:consistent_deform}
To supplement 3D design for blade or wing shapes, we will require parallel translation to facilitate the mapping of consistent perturbations to different elements of the manifold. Given $p,z \in \mathcal{M}$, parallel translation acts as an isometry parametrizing zero-acceleration transport of tangent vectors over geodesic curves in the manifold, $\tau_{p \parallel z}:T_{p}\mathcal{M} \rightarrow T_{z}\mathcal{M}$. Note that parallel translation is unique and exists over any curve in $\mathcal{M}$. However, if defined over unique geodesic curves within a normal neighborhood, parallel translation has the advantage of not requiring any memory of the curve. That is, we can always reconstruct the original tangent vectors by ``back-tracing'' parallel translation over the choice of unique geodesics that are intrinsic to the manifold. This alleviates the computational burden by taking advantage of tensor parametrizations that are consistent with geodesic trajectories but do not require the otherwise burdensome numerical integration of the dynamics~\cite{arvanitidis2017latent} for the two special cases of Grassmannian and SPD manifolds.

The mapping $\tau_{p \parallel z}$ preserves the notion of direction within a tangent space such that
\begin{equation}
g_p(\bm{u}, \bm{v}) = g_z(\tau_{p \parallel z}(\bm{u}), \tau_{p \parallel z}(\bm{v}))
\end{equation}
for $\bm{u}, \bm{v} \in T_p\mathcal{M}$ and $p,z \in \mathcal{M}$. Consequently, given a basis defined at a particular point, we can map directions in the span of this basis to new tangent spaces along geodesics parametrized by endpoints, i.e., $\tau_{p \parallel z}(\cdot) = \tau_{\parallel}(\cdot; p, z)$. These connected directions have equivalent inner products taken in the central tangent space, constituting the closest analogue of a consistent parameter direction over distinct tangent spaces.

Applying this process to airfoil shapes, we can use Algorithm~\ref{alg:Karcher} to identify the intrinsic mean shape $p_0$ and its central tangent space $T_{p_0}\mathcal{M}$. Next, using Algorithm~\ref{alg:PGA}, we compute a basis in this tangent space describing dominant perturbations about this mean shape. Given a particular deformation to the mean shape $U_r\bm{t} \in T_{p_0}\mathcal{M}$ with coefficients $\bm{t}$, we can then consistently map these deformations to new shapes via $\tau_{\parallel}(U_r\bm{t}; p_0, p)$. Such a procedure preserves the original notion of direction in the parametrization at $T_{p_0}\mathcal{M}$ but assigns the deformation to the shape $p$. The result is our definition of \emph{consistent deformations} of distinct shapes---specifically, consistency in the inner product at the intrinsic mean.

\subsubsection{Geometry of Riemannian Product Manifolds}
Motivated by the separability of our shape parametrizations from \eqref{eq:polar_sep}, we explore the concept of a product manifold. Given $(\mathcal{M}_1, g_1)$ and $(\mathcal{M}_2, g_2)$, we consider the topology of a Riemannian product manifold $(\mathcal{M}_1 \times \mathcal{M}_2, g_1 + g_2)$. Under this construction, we take advantage of the ability to parametrize geodesics in a componentwise manner. Letting $\text{Exp}_i$ denote the exponential map for the manifold $\mathcal{M}_i$, $(\text{Exp}_{1,p}(t\bm{v}), \text{Exp}_{2,z}(t\bm{u}))$ is the corresponding geodesic over $(\mathcal{M}_1 \times \mathcal{M}_2, g_1 + g_2)$ for $t\in \mathbb{R}$. This allows us to independently formulate the necessary intrinsic maps over distinct manifolds and combine these computations in a componentwise manner to build shapes using our separable representations. Given this convenience, the general submanifold of discrete shapes assumes the defined product manifold topology over $\mathcal{G}(n,2) \times S^2_{++}$ with the canonical ambient metric and affine-invariant metric, respectively.

\subsubsection{Geometry of Orthogonal Matrices: $\mathcal{G}(n,2)$}\label{subsub:grassmann}

Following the developments of~\cite{edelman1998geometry, gallivan2003efficient, bendokat2020grassmann}, we present algorithmic routines for the computation of the $\text{Exp}$ and $\text{Log}$ maps over $\mathcal{G}(n,2)$ for completeness.\footnote{We assume the Riemannian metric $\text{tr}(A^{\top}B)$ inherited from embedding space~\cite{absil2008optimization}.} 

\begin{algorithm}
\caption{Thm. 2.3~\cite{edelman1998geometry}, Grassmann exponential with Stiefel representatives}
\label{alg:Gr_Exp}
    \begin{algorithmic}[1]
        \Require Representative matrix $\tilde{X}\in \mathbb{R}_*^{n\times 2}$ with orthonormal columns and direction $\Delta \in T_{[\tilde{X}]}\mathcal{G}(n,2) \subset \mathbb{R}_*^{n\times 2}$
        
        \State Compute thin SVD, 
        $$
        \Delta = U_{\Delta}\Sigma_{\Delta}V^{\top}_{\Delta}
        $$
        \State Compute, with $\cos(\cdot)$ and $\sin(\cdot)$ acting only on diagonal entries,
        $$
        \tilde{Y} = \tilde{X}V_{\Delta}\cos (\Sigma_{\Delta})V_{\Delta}^{\top} + U_{\Delta}\sin (\Sigma_{\Delta})V_{\Delta}^{\top}
        $$
        \Return $[\tilde{Y}] = \text{Exp}_{[\tilde{X}]}(\Delta) \in \mathcal{G}(n,2)$
    \end{algorithmic}
    
\end{algorithm}

First, we present the exponential map as Algorithm~\ref{alg:Gr_Exp}. Leveraging this algorithm, we can compute $\text{Exp}_{[\tilde{X}]}(\Delta)$ to take a unit step in the direction $\Delta$ from the equivalence class $[\tilde{X}]$. As a reparametrization, we can arbitrarily scale distances along this geodesic via the one-parameter subgroup $\gamma(t; [\tilde{X}], \Delta) = \text{Exp}_{[\tilde{X}]}(t\Delta)$ for all $t \in \mathbb{R}$, identifying the base point as $\gamma(0; [\tilde{X}], \Delta) = [\tilde{X}]$. For general $\mathcal{G}(n,q)$, this algorithm has computational complexity $O(nq^2)$ by virtue of a generalized SVD as the only iterative procedure~\cite{gallivan2003efficient}. In our fixed ambient spatial dimension $q=2$, we anticipate linear scaling of the computational burden with increasing $n$~\cite{gallivan2003efficient, zimmermann2019manifold, bendokat2020grassmann}. 

Next, we present Algorithm~\ref{alg:Gr_Log} for computing the inverse exponential map~\cite{gallivan2003efficient,absil2004riemannian,zimmermann2019manifold, bendokat2020grassmann}. In general, this algorithm has computational complexity $O(nq^2)$, which again implies linear complexity with landmark refinements for fixed ambient spatial dimension $q=2$. However, treating this map simply as a matrix transformation is subject to $\text{Exp}_{[\tilde{X}]}(\text{Log}_{[\tilde{X}]}(\tilde{Y})) \neq \tilde{Y}$ with mismatch up to rotation~\cite{zimmermann2019manifold, bendokat2020grassmann}. The mismatched matrix still corresponds to the same $[\tilde{Y}] = [\hat{Y}]$ such that $\tilde{Y} \sim \hat{Y}$, but it lacks the desirable property that $\text{Exp}_{[\tilde{X}]} \circ \text{Log}_{[\tilde{X}]}$ returns the same element of the equivalence class up to reflections. This computational inconvenience is corrected by Procrustes matching~\cite{zimmermann2019manifold, bendokat2020grassmann}. This concept comes into play for the purposes of 3D blade and wing cross section interpolation, where a sequence of representative matrices of the Grassmannian intended for interpolation can be mismatched up to rotation, requiring Procrustes analysis to align. This is discussed further in Section~\ref{subsec:Grassmann_blade_interp}. Despite this correction, there remains a possibility that the shape is reflected in the composition $\text{Exp}_{[\tilde{X}]} \circ \text{Log}_{[\tilde{X}]}$. Additional checks or constraints may be required for detecting such a condition, depending on the data.

\begin{algorithm}
\caption{Alg. 5.3~\cite{bendokat2020grassmann}, Alg. 10~\cite{zimmermann2019manifold}, Grassmann logarithm with Stiefel representatives}
\label{alg:Gr_Log}
\begin{algorithmic}[1]
    \Require Representative matrices $\tilde{X}, \tilde{Y} \in \mathbb{R}^{n\times 2}_*$ with orthonormal columns
    \State Take the matrix product, $\tilde{Q} = \tilde{X}^{\top}\tilde{Y}$
    \State Define orthogonal projection (to normal space),
    $$
    \pi_n^{\perp} =(I_n - \tilde{X}\tilde{X}^{\top})
    $$
    \State Compute thin SVD, 
    $$
    \pi^{\perp}_n(\tilde{Y}\tilde{Q}^{-1}) = U_{\Delta}\Sigma_{\Delta}V_{\Delta}^{\top}
    $$
    \State Compute, with $\arctan(\cdot)$ acting only on diagonal entries, 
    $$
    \Delta = U_{\Delta}\arctan(\Sigma_{\Delta})V_{\Delta}^{\top}
    $$
    \Return $\Delta = \text{Log}_{[\tilde{X}]}([\tilde{Y}]) \in T_{[\tilde{X}]}\mathcal{G}(n,2)$
\end{algorithmic}
\end{algorithm}

A modified version of Algorithm~\ref{alg:Gr_Log} such that $\text{Exp}_{[\tilde{X}]}(\text{Log}_{[\tilde{X}]}(\tilde{Y})) = \tilde{Y}$ up to reflections can be found in~\cite{zimmermann2019manifold,bendokat2020grassmann} but is omitted from this presentation for brevity. The motivation here is to highlight the relatively inexpensive computations of these intrinsic mappings---linear growth in computational complexity for refined shapes in fixed ambient spatial dimension---for the purposes of informing a statistical analysis and separable representation of shapes. The modified version of Algorithm~\ref{alg:Gr_Log} requires two SVD computations but remedies the representative rotational mismatch and, as an added benefit, avoids the calculation of $\tilde{Q}^{-1}$~\cite{zimmermann2019manifold}.

\begin{algorithm}
\caption{Thm. 2.4~\cite{edelman1998geometry}, Grassmann parallel transport along geodesic with Stiefel representatives}
\label{alg:Gr_parallel}
    \begin{algorithmic}[1]
        \Require Representative matrix $\tilde{X}\in \mathbb{R}_*^{n\times 2}$ with orthonormal columns and directions $\Delta, \Gamma \in T_{[\tilde{X}]}\mathcal{G}(n,2) \subset \mathbb{R}_*^{n\times 2}$, and scalar $t \in \mathbb{R}$
        
        \State Compute thin SVD, 
        $$
        \Delta = U_{\Delta}\Sigma_{\Delta}V^{\top}_{\Delta}
        $$
        
        \State Compute parallel transport operator, with $\cos(\cdot)$ and $\sin(\cdot)$ acting only on diagonal entries,
        
        \begin{multline*}
            \tau_{\parallel}(t;[\tilde{X}], \Delta)= \left(\tilde{X}V_{\Delta}\,\, U_{\Delta} \right)\left( \begin{matrix}
            -\sin (t\Sigma_{\Delta})\\
            \cos (t\Sigma_{\Delta})
            \end{matrix}\right)
            U_{\Delta}^{\top} \\
            + I_n - U_{\Delta}U_{\Delta}^{\top}
        \end{multline*}
        
        \noindent
        \Return $\Gamma(t;[\tilde{X}], \Delta) = \tau_{\parallel}(t;[\tilde{X}], \Delta)\Gamma$
    \end{algorithmic}
    
\end{algorithm}

Lastly, we present the algorithm for parallel translation~\cite{edelman1998geometry}. Using Algorithm~\ref{alg:Gr_parallel}, $\Gamma(t; [\tilde{X}], \Delta)  \in T_{\gamma(t;[\tilde{X}], \Delta )}\mathcal{G}(n,2)$ is the parallel translation of $\Gamma$ along the geodesic a distance scaled by $t$ emanating from $[\tilde{X}]$ in the direction $\Delta$. Again, a comparable computational complexity is achieved by only requiring the thin SVD of the direction $\Delta$---defining the geodesic $\gamma(t;[\tilde{X}], \Delta))$---over which $\Gamma$ is parallel translated by the $n\times n$ matrix $\tau_{\parallel}(t;[\tilde{X}], \Delta)$.

Notice that Algorithm~\ref{alg:Gr_parallel} is parametrized by a ``point'' and a ``direction'' as opposed to two endpoints, per the development of consistent deformations, $\tau_{p\parallel z}(\cdot) = \tau_{\parallel}(\cdot; p,z)$. Both interpretations are valid (within a geodesic ball) because the direction and magnitude parametrizing the geodesic from $p$ to $z$ (generally) is implied by $\text{Log}_{p}(z) \in T_{p}\mathcal{M}$, while the endpoints in Algorithm~\ref{alg:Gr_parallel} are implied by $\gamma(0; [\tilde{X}], \Delta)$ and $\gamma(t; [\tilde{X}], \Delta)$ using Algorithm~\ref{alg:Gr_Exp}. In our implementations, the endpoint parametrization is most appropriate, thus modifying the Grassmannian parallel translation of $\Gamma \in T_{[\tilde{X}]}\mathcal{G}(n,2)$ to $T_{[\tilde{Y}]}\mathcal{G}(n,2)$, as $\tau_{\parallel}(\Gamma;[\tilde{X}],[\tilde{Y}])  = \tau_{\parallel}(1;[\tilde{X}], \text{Log}_{[\tilde{X}]}([\tilde{Y}]))\Gamma$ by composing Algorithm~\ref{alg:Gr_parallel} with Algorithm~\ref{alg:Gr_Log}.

\subsubsection{Geometry of the Symmetric Space: $S^2_{++}$}
Next, we review the necessary computations over the convex cone of $2\times 2$ SPD matrices, $S^2_{++}$. We follow the developments of \cite{pennec2006riemannian, bonnabel2010riemannian, fletcher2004principal} for algorithms to compute $\text{Exp}$ and $\text{Log}$ maps, while \cite{sra2015conic} additionally expound on the computation of parallel transport. More recent implementations and interpretations are facilitated by \cite{yair2019parallel, zimmermann2019manifold}.\footnote{Consistent with the referenced developments, we assume the \emph{affine-invariant metric} for all computations over $S^2_{++}$.} For brevity, we reiterate the presentation of \cite{zimmermann2019manifold} and note that a consistent yet more systematic implementation is discussed in \cite{fletcher2004principal} for the purposes of computing $P^{1/2}$ and $P^{-1/2}$. 

\begin{algorithm}
\caption{\cite{zimmermann2019manifold, fletcher2004principal}, SPD exponential}
\label{alg:SPD_Exp}
    \begin{algorithmic}[1]
        \Require Matrix $P\in S^2_{++}$ and direction $S \in T_{P}S^2_{++} \cong \text{Sym}(2)$
        
        \State $D = P^{1/2}\exp(P^{-1/2}SP^{-1/2})P^{1/2}$
        
        \noindent
        \Return $D = \text{Exp}_{P}(S) \in S^2_{++}$
    \end{algorithmic}
    
\end{algorithm}

The exponential map over $S^2_{++}$ is stated as Algorithm~\ref{alg:SPD_Exp}. This algorithm induces the one-parameter subgroup for moving along geodesics emanating from $P$ in the direction $S$ as a $2\times 2$ symmetric matrix, $\gamma(t; P, S) = \text{Exp}_{P}(tS)$. Note that ``$\exp$'' represents the matrix exponential (distinct from ``$\text{Exp}$,'' given by the algorithm for a particular choice of Riemannian metric). 

\begin{algorithm}
\caption{\cite{zimmermann2019manifold, fletcher2004principal}, SPD logarithm}
\label{alg:SPD_Log}
    \begin{algorithmic}[1]
        \Require Matrices $P, D \in S^2_{++}$
        
        \State $S = P^{1/2}\log(P^{-1/2}DP^{-1/2})P^{1/2}$
        
        \noindent
        \Return $S = \text{Log}_{P}(D) \in T_{P}S^2_{++} \cong \text{Sym}(2)$
    \end{algorithmic}
    
\end{algorithm}

The inverse exponential map is stated as Algorithm~\ref{alg:SPD_Log}. Note that in this algorithm, ``$\log$'' represents the matrix logarithm (distinct from ``$\text{Log}$'').

\begin{algorithm}
\caption{\cite{sra2015conic,yair2019parallel}, SPD parallel translation along geodesic}
\label{alg:SPD_parallel}
    \begin{algorithmic}[1]
        \Require Matrices $P, D \in S^2_{++}$ and direction $S \in T_{P}S^2_{++} \cong \text{Sym}(2)$
        
        \State Compute $E = (DP^{-1})^{1/2}$
        
        \noindent
        \Return $ESE^{\top} = \tau_{\parallel}(S;P, D) \in T_{D}S^2_{++}$
    \end{algorithmic}
    
\end{algorithm}

Finally, parallel translation is given by Algorithm~\ref{alg:SPD_parallel}. In this algorithm, parallel translation is parametrized by the two endpoints of the geodesic curve, $\tau_{\parallel}(S; P, D) = \tau_{P\parallel D}(S)$, consistent with the generalizations described in Section~\ref{subsec:consistent_deform}.

These routines, combined with the Grassmannian routines, offer a complete picture of the necessary computations for learning a manifold of discrete shapes for 2D and 3D blade design and interpolation. The improved notion of distances between shapes informs more favorable distributions for numerical studies and supplements improved shape representations by regularizing deformations---i.e., constraining to data-driven submanifolds. Moreover, this approach constitutes a more principled perspective for learning a submanifold of discrete shapes from data with reduced computational costs compared to alternative ML-based methods.

\subsection{Grassmannian Blade Interpolation}
\label{subsec:Grassmann_blade_interp}
We discuss the procedure for applying the framework of separable shape tensors to interpolate a sequence of 2D shapes into a 3D blade/wing. We also discuss the implications of a \emph{Procrustes clustering} approach to select best representative matrices from equivalence classes for interpolation.


\begin{figure*}[t]
	\centering
	\includegraphics[width=\textwidth]{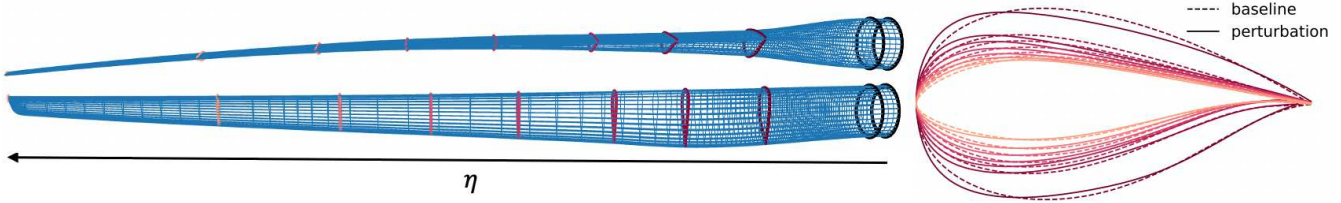}
	\caption{(left) Example of an IEA 15-MW blade wire-frame obtained from interpolation of the solid-color cross sections (right) omitting the unperturbed circles. Note that consistent perturbations to the shape (right) are applied to all of the baseline airfoils in the blade. Visually, we observe each of the distinct airfoils sampled from distinct classes deformed in a markedly similar manner. Also, note the curved span axis (bending along $\eta$), which is easily accounted for by including planar shapes in $\mathbb{R}^3$, then rotating the shapes into a plane with normal tangent to the span axis.}
	\label{fig:interp_blade}
\end{figure*}
\begin{figure*}[t]
	\centering
	\begin{subfigure}{0.65\textwidth}
         \includegraphics[width=0.9\textwidth]{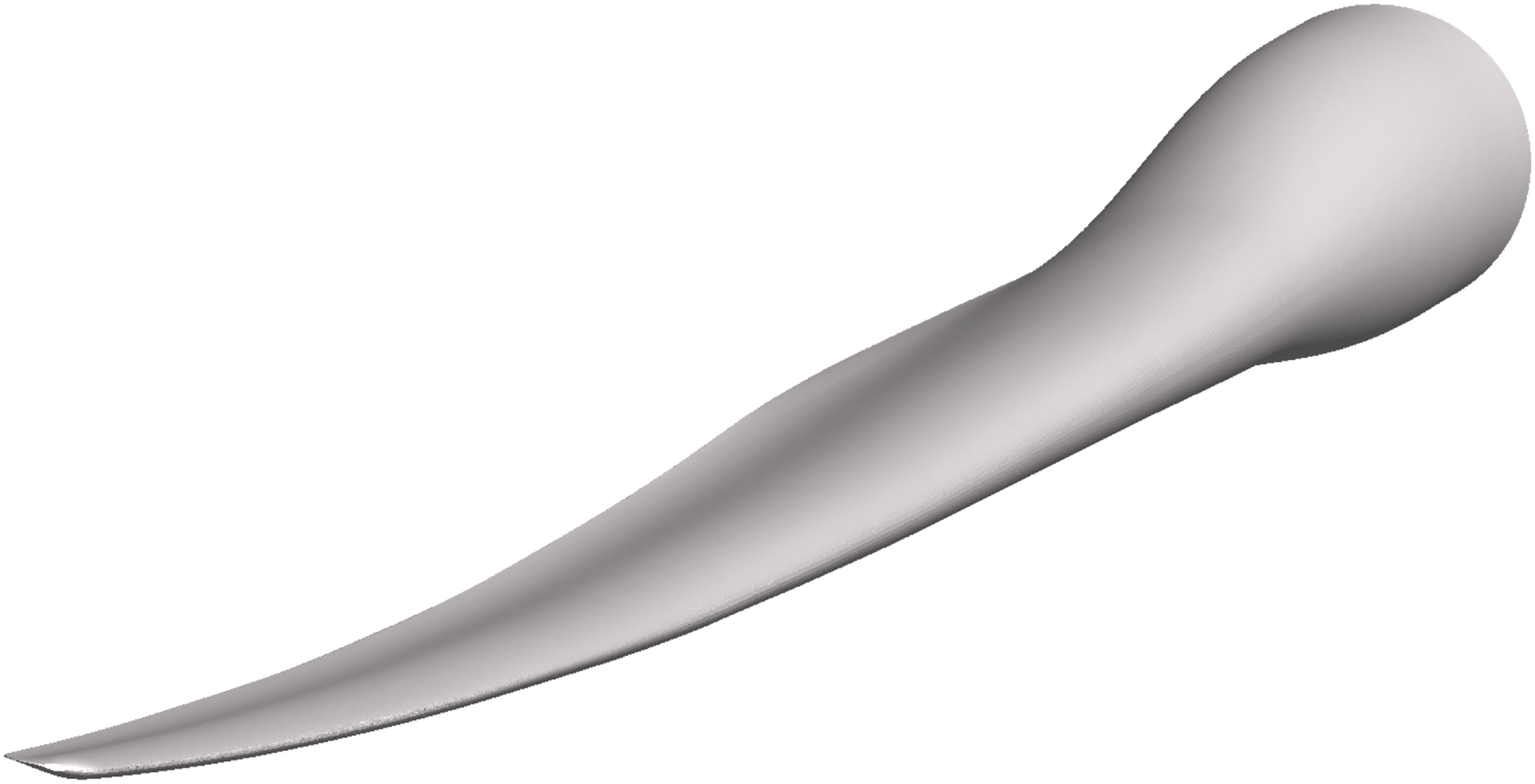}
     \end{subfigure}
     \hfill
     \begin{subfigure}{0.3\textwidth}
         \includegraphics[width=1\textwidth]{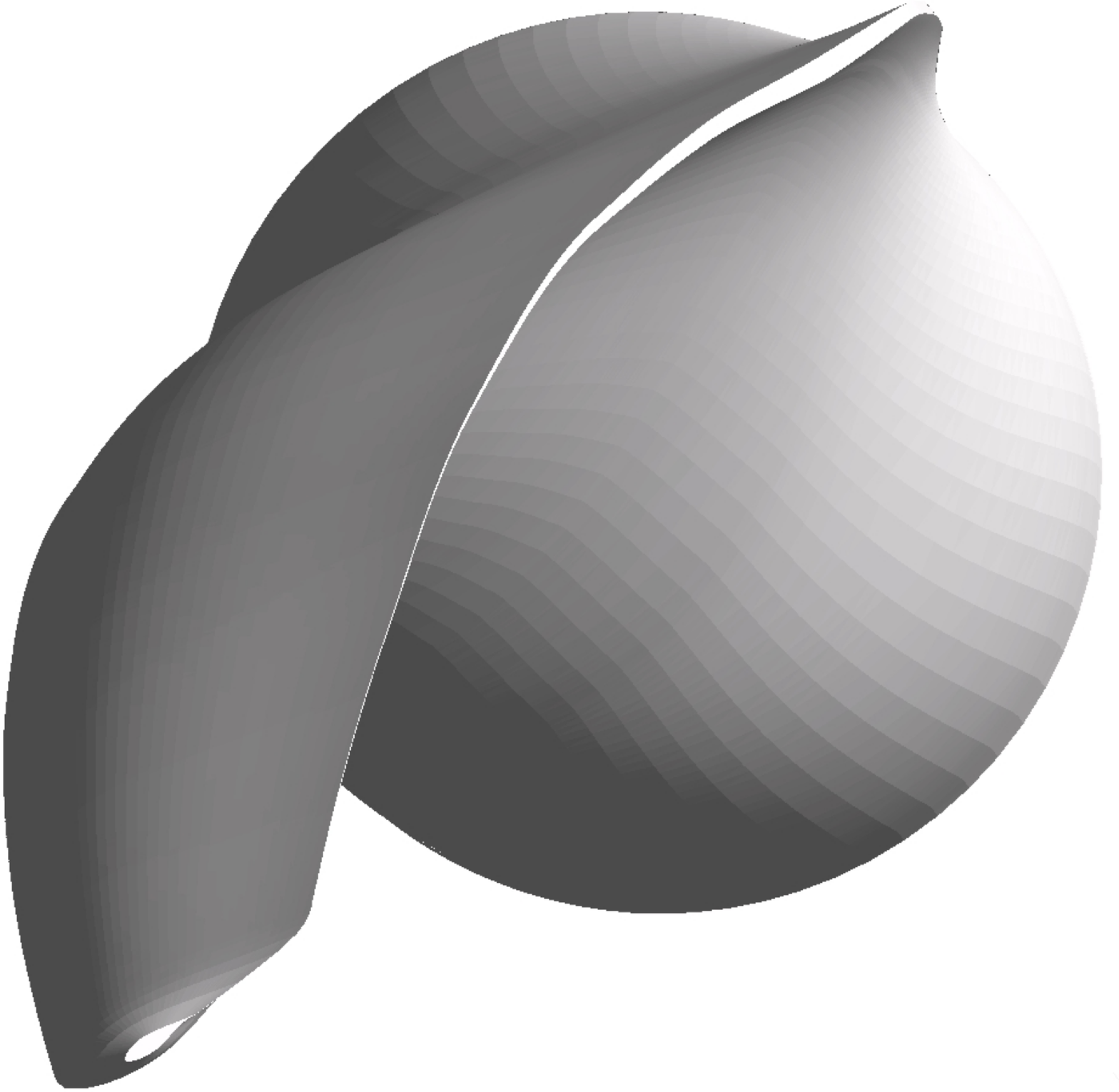}
     \end{subfigure}
     \hfill
	\caption{Structured surface mesh of $10$ nominal cross sections interpolated to $100$ refined cross sections, defining a ``through curves'' surface representation from B-splines of discrete shapes with $n=401$. The mesh was generated using Gmsh~\cite{geuzaine2009gmsh} with the Grassmannian interpolation wire-frame provided as input. The input wire-frame to the surface meshing routine is consistent with the wire-frame depicted in Fig.~\ref{fig:interp_blade}.}
	\label{fig:mesh_blade}
\end{figure*}

The separable shape tensor framework for airfoil representation has the added benefit of enabling the design of 3D wings and blades. In the context of wind energy, blade shapes are defined by a limited number of landmark airfoils located at different blade-span positions, as well as by profiles of twist, chordal scaling, and bending. Defining the full blade shape given the relatively small number of defining airfoil shapes along the blade is a nontrivial problem. Simple interpolation techniques result in undesirable blade features, such as kinks or dimples, in the regions between airfoils. Currently, airfoils must be designed as collections or families that will interpolate smoothly to construct the blade. The goal here is to define an interpolation of these shapes---independent of the prescribed affine deformations---that results in physically relevant blade definitions. In addition to interpolating between designed airfoil shapes, we may seek a separable representation from measured blades and subsequently infer a smoothly varying set of affine deformations over discrete blade-span positions corresponding to twist, scaling, and bending profiles of the blade. 

\subsubsection{Interpolation Procedure}
Given a sequence of matrices $(X_k) \in \mathbb{R}_*^{n\times2}$ for $k=1,\dots,N$ that represent landmark airfoil shapes in a wing or turbine blade, we induce the corresponding sequence of equivalence classes $([\tilde{X}_k]) \in \mathcal{G}(n,2)$ located at blade-span positions $\eta_k \in \mathcal{S} \subset \mathbb{R}$ from root to tip. An example of such landmark shapes is depicted by the red-yellow curves in Fig.~\ref{fig:interp_blade}. We define a piecewise geodesic path over the Grassmannian to interpolate representative airfoil shapes from $([\tilde{X}_k])$. This results in a continuous representation of the 3D blade shape using piecewise geodesic paths over ordered blade-span positions $\eta_k$ along a nonlinear representative manifold of shapes. In practice, we use piecewise geodesic interpolation $\bm{\gamma}(\tilde{t}; [\tilde{X}_k], \text{Log}_{[\tilde{X}_k]}([\tilde{X}_{k+1}]))$ via Algorithms~\ref{alg:Gr_Exp} and \ref{alg:Gr_Log} for all $\tilde{t} \in [0,1]$ and $k=1,\dots,N-1$ enumerating the sequence of geodesics interpolating between representative shapes over the Grassmannian. This procedure is also described as Algorithm 2 of \cite{zimmermann2019manifold} and is summarized below.

To implement this approach, we must first reconcile differences in length scales over the piecewise geodesic Grassmannian curve and the spanwise physical distances between the airfoils within the blade. We define a monotonic reparametrization to consider a mapping from the physically relevant blade-span position $\eta \in \mathcal{S}$ to the corresponding cumulative distance over the Grassmannian, $\varphi:\eta \mapsto t$. In practice, this mapping can be built from a PCHIP of $\lbrace( \eta_k, t_k)\rbrace$ as a monotonically increasing function such that
\begin{equation}
    t_k = \begin{cases}
    0, \,\, k=1\\
    \sum_{i=1}^{k-1}d_{\mathcal{G}}([\tilde{X}_{i+1}], [\tilde{X}_{i}]), \,\, k=2,\dots,N.
    \end{cases}
\end{equation}
Then, within any piecewise interval $[t_k,t_{k+1}] = [\varphi(\eta_k), \varphi(\eta_{k+1})]$, we scale $[t_k,t_{k+1}]$ to $[0,1]$ to build interpolated shapes over the subinterval, informing $[\tilde{X}](\varphi(\eta))$ for all $\eta \in [\eta_k, \eta_{k+1}]$. As a three-step procedure, given any $\eta\in \mathcal{S}$: (i) convert to the appropriate cumulative Grassmannian distance $t =\varphi(\eta)$ and identify the corresponding subinterval, $t \in [t_k,t_{k+1}]$; (ii) scale to a normalized coordinate $\tilde{t} = (\varphi(\eta) - t_k)/(t_{k+1} - t_k)$; and (iii) compute $\text{Exp}_{[\tilde{X}_k]} (\tilde{t}\text{Log}_{[\tilde{X}_k]}([\tilde{X}_{k+1}])$ using the composition of Algorithms~\ref{alg:Gr_Exp} and \ref{alg:Gr_Log}. Finally, to map the interpolated shapes over the Grassmannian back to physically relevant scales, we apply appropriate affine deformations using six regularized splines of data or explicit parametrizations $M(\eta)$ and $\bm{b}(\eta)$,
\begin{equation} \label{eq:blade}
	X(\eta) = (\tilde{X} \circ \varphi)(\eta)M(\eta)+1_{n,2}\text{diag}(\bm{b}(\eta)).
\end{equation}
In Section~\ref{subsub:procrustes}, we discuss the implications of inferring $M(\eta)$ and $\bm{b}(\eta)$ from measured data $(X_k)$ using splines.

As an example computation, on a laptop (2.4 GHz 8-Core Intel Core i9 macOS Catalina Memory: 32 GB 2667 MHz DDR4), the interpolation routine with reparametrized shapes according to $n=401$, with $N=10$ nominal cross sections as input and a refinement of $100$ cross sections defining the wire-frame, took approximately $0.04$ seconds, on average. The corresponding blade is shown in Fig.~\ref{fig:interp_blade}, with a resulting structured surface mesh shown in Fig.~\ref{fig:mesh_blade}. Varying refinements up to $1,000$ new cross sections defining the wire-frame took $0.12$ seconds, and refinements up to $100,000$ new cross sections took $14.9$ seconds, on average (all else fixed). For comparison, it often takes longer to read the $10$ nominal cross sections into memory than it does to run the interpolation routine with refinements between $100$ and $1,000$ cross sections. Code and examples are available at \cite{doecode_73484}.

\subsubsection{Procrustes Clustering} \label{subsub:procrustes}
If the affine scale variations are implicitly encoded in the original data $(X_k)$---i.e., the sequence of discrete airfoils has already been appropriately scaled to size and orientation and do not constitute shapes with fixed orientation and unit-chord---then computations of the discrete centers of mass $\bm{b}_k = 1/nX_k^{\top}1_{n,1}$ and $( \tilde{M}_{k})$, using~\eqref{eq:LA_M} for $(X_k)$, can be utilized. With data $\lbrace(\eta_k,\tilde{M}_{k},\bm{b}_k)\rbrace$, we construct six entrywise splines over strictly increasing $(\eta_k)$, defining $M(\eta) = \tilde{M}(\eta)$ and $\bm{b}(\eta)$ in~\eqref{eq:blade} with appropriate endpoint conditions. However, large uncontrolled variations in $(\tilde{M}_{k})$, namely from rotation, may be problematic in the construction of the four corresponding entrywise splines. 

Recalling the concerns with Algorithm~\ref{alg:Gr_Log} about a rotational mismatch between representative shapes from equivalence classes, we perform a \emph{Procrustes clustering} along the reversed order of representative shapes---i.e., for $k=N,\dots,2$, we solve
\begin{equation} \label{eq:Procrustes}
    R_{k-1} = \underset{R \in \mathcal{O}(2)}{\text{argmin}} \,\, \Vert \tilde{X}_{k} - \tilde{X}_{k-1}R \Vert_F
\end{equation}
then apply the computed rotation to the LA-standardized shape $\tilde{X}_{k-1}R_{k-1}$ and reassign scale variations to $\tilde{M}_{k-1}R_{k-1}$ so that the reversed\footnote{Our intuition is that wind turbine shapes are typically more similar from tip to hub, and the hub shape is often circular and thus invariant under rotation. Thus, reversing the order may be desirable but is seemingly unnecessary.} sequence of shapes are best matched (sequentially) for interpolation. 

This is an important caveat when inverting the shapes in~\eqref{eq:blade} back to the physically relevant scales for subsequent affine deformations inferred from data. As a procedural interpretation, from the blade tip shape $\tilde{X}_{N}$ to the blade hub shape $\tilde{X}_1$, we sequentially match the representative LA-standardized shapes via Procrustes analysis~\cite{gower1975generalized} using~\eqref{eq:Procrustes}. This offers rotations that can be applied to representative LA-standardized airfoils for matching---but does not modify the underlying piecewise geodesic interpolation along the Grassmannian, which is independent of these rotations. Consequently, we cluster the sequence of representative shapes $\tilde{X}_k$ by optimal rotations in each $[\tilde{X}_k]$ to ensure they are best oriented from tip to hub and to mitigate concerns about large variations in entrywise splines of $\tilde{M}(\eta)$.

\section{Data-Driven Representations}
We apply the detailed developments of the Riemannian interpretations to learn the manifold coordinates $(\bm{t},\bm{\ell})$ that define a generative airfoil model from data in seconds. We then demonstrate the generation of novel 3D blade shapes by consistently deforming landmark airfoils with chosen parameter dimensionality independent from the number of interpolated 2D cross sections.

To explore separable shape representations, we use a data set containing $1,000$ perturbations to $16$ baseline airfoils for a total of $16,000$ shapes from the NREL 5-MW, DTU (Technical University of Denmark) 10-MW, and IEA 15-MW reference wind turbines~\cite{JonkmanBMS:2009,bak2013description,IEA15MW_ORWT}. The baseline airfoils are defined by $18$ nominal CST coefficients with the trailing edge thickness coefficients set to zero. We then perturb these $18$ coefficients by up to $\pm 20\%$ of their nominal value to create an ensemble of random airfoils. 

The left plot of Fig.~\ref{fig:scatter} shows a marginal 2D slice through the 18-dimensional space of CST coefficients that defines the collection of shapes under consideration. Note that across the $16$ baseline shapes, the groups of perturbations to nominal CST coefficients create a complex, highly disjoint design domain given the variety of airfoil classes. This can significantly impact the performance of various AI/ML algorithms in analyzing airfoils across this domain. We demonstrate how the proposed separable representation addresses these issues with the CST representation.

\subsection{Principal Geodesic Deformations}
We describe the process for applying the framework of separable shape tensors to learning coordinates ${(\bm{t},\bm{\ell}) \in \mathcal{T} \times \mathcal{P}}$ for aerodynamic design and generation of new shapes. We explicitly parametrize two submanifolds used to define the shape product manifold. Additionally, interpolation is revisited in this context for the construction of 3D wings/blades from sequences of 2D shapes. Numerical examples demonstrate an improvement in shape generation along random continuous sweeps over the inferred Grassmannian parameter domain.

\begin{figure}
	\centering
	\includegraphics[width=0.95\linewidth]{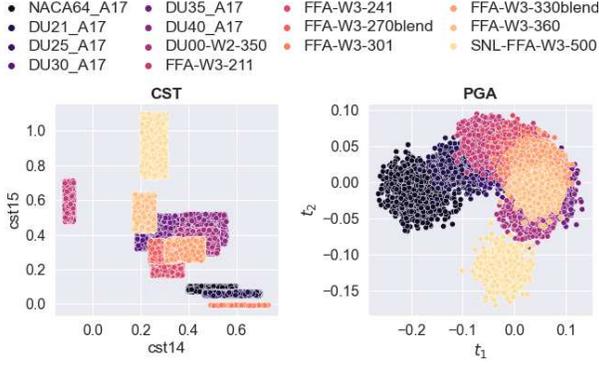}
	\caption{Comparison of the airfoil data over two of the 18 total CST parameters (left) and two of the four total normal coordinates (right), with colors indicating different named classes of designed airfoils. Notice that the transformed normal coordinates resemble a Gaussian mixture.}
	\label{fig:scatter}
\end{figure}
\begin{figure*}
	\centering
	\includegraphics[width=\linewidth]{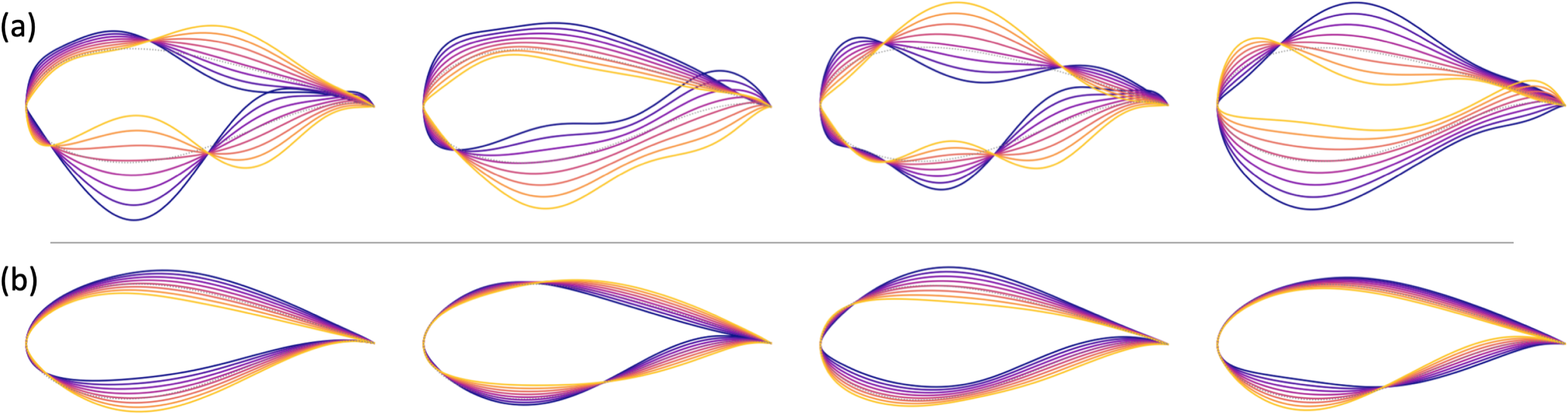}
	\caption{A series of random corner-to-corner sweeps across a hypercube containing all points partially visualized in Fig.~\ref{fig:scatter} through (a) CST parameter space and (b) the Grassmannian submanifold $\mathcal{A}_4$. The sweeps represent geodesic trajectories from yellow to blue shapes passing through the origin of the respective coordinate systems---CST parameter space is assumed to be flat.}
	\label{fig:corner_sweeps}
\end{figure*}

We infer nonparametric data-driven airfoil deformations by applying PGA (Algorithm~\ref{alg:PGA}), using Algorithms~\ref{alg:Gr_Exp} and \ref{alg:Gr_Log} and/or Algorithms~\ref{alg:SPD_Exp} and \ref{alg:SPD_Log} to compute the required intrinsic maps for the respective matrix manifolds. This informs an approximated \emph{central tangent space} at a Karcher mean, denoted $T_{[\tilde{X}_0]}\mathcal{G}(n,2)$ or $T_{[P_0]}S^2_{++}$, computed with Algorithm~\ref{alg:Karcher} using the corresponding intrinsic maps for either manifold of interest. The subsequent image of $\text{Exp}$, given at respective Karcher means over subspaces $\mathcal{T} \subseteq T_{[\tilde{X}_0]}\mathcal{G}(n,2)$ or $\mathcal{P} \subseteq T_{P_0}S^2_{++}$, define \emph{planar sections}~\cite{lee2006riemannian} of the manifolds as submanifolds, $\text{Exp}_{[\tilde{X}_0]}(\mathcal{T}) \subseteq \mathcal{G}(n,2)$ or $\text{Exp}_{P_0}(\mathcal{P}) \subseteq S^2_{++}$.  In this way, PGA and the product manifold construction constitute a manifold learning procedure for computing important submanifolds that represent a design space of physically relevant airfoil shapes inferred from provided data. 

Based on the strength of the decay in eigenvalues over $T_{[\tilde{X}_0]}\mathcal{G}(n,2)$, we take the first $r$ eigenvectors as a reduced basis for Grassmannian PGA deformations. For PGA over the SPD matrix manifold, we retain all three dimensions of $S^2_{++}$. To compute coordinates $\bm{t}_k \in \mathcal{T}$ or $\bm{\ell}_k \in \mathcal{P}$ corresponding to each $X_k$ in a new basis returned by Algorithm~\ref{alg:PGA}, we map LA-standardized airfoils to normal coordinates of $T_{[\tilde{X}_0]}\mathcal{G}(n,2)$ or $T_{P_0}S^2_{++}$ via inner products with the computed basis. In particular, in Algorithm~\ref{alg:PGA}, we compute 
\begin{equation}
    \bm{t}_k = T_r^{\top}\text{vec}(\text{Log}_{[\tilde{X}_0]}([\tilde{X}_k]))
\end{equation}
or 
\begin{equation}
    \bm{\ell}_k = L_3^{\top}\text{vec}(\text{Log}_{P_0}(P_k)) 
\end{equation}
with $T_r$ and $L_3$ as the approximated bases returned by Algorithm~\ref{alg:PGA} applied to either matrix manifold of interest.\footnote{Note that the ``$\text{vec}$'' operation applied to symmetric matrices in $T_{P_0}S^2_{++}$ only ``vectorizes" the three unique entries. In a similar manner, $\text{vec}^{-1}$ in this sense would duplicate the corresponding symmetric entry when mapping to the representative matrix, such that $\text{vec}^{-1}(\text{vec}(P)) = P$.} Notice that the coordinates $\bm{t}_k$ and $\bm{\ell}_k$ correspond to the columns of the principal normal coordinate matrix $T$ returned by Algorithm~\ref{alg:PGA}. This implies definitions of parameter spaces as $\mathcal{T} \subseteq T_{[\tilde{X}_0]}\mathcal{A}_r$ and $\mathcal{P} \subseteq T_{P_0}S^2_{++}$. 

As an example, on a laptop (2.4 GHz 8-Core Intel Core i9 macOS Catalina Memory: 32 GB 2667 MHz DDR4), we randomly subsampled $N=10,000$ shapes from a database of $35,035$ total shapes. Ten sets of random draws with $N=10,000$ were then fed to Algorithm~\ref{alg:Karcher} and Algorithm~\ref{alg:PGA} which were applied over the Grassmannian for $n=401$ and $\epsilon=1e-8$. Algorithm~\ref{alg:Karcher} ran in $14.3$ seconds, on average, over the ten draws, while Algorithm~\ref{alg:PGA}---including the computational time of Algorithm~\ref{alg:Karcher}---ran in $18.2$ seconds (an additional $3.9$ seconds) on average. Again, code and examples are available at \cite{doecode_73484}.

Having defined reduced-dimension normal coordinates $\bm{t}_k \in \mathcal{T}$ with paired $\bm{\ell}_k \in \mathcal{P}$, we restrict $\mathcal{T}$ and $\mathcal{P}$ as compact sets over the respective tangent spaces, which contain PGA coordinates described by an appropriate distribution---e.g., uniform over a ball containing the data $\lbrace \bm{t}_k\rbrace$ and/or $\lbrace(\bm{t}_k,\bm{\ell}_k) \rbrace$. 
Then, the set of all linear combinations of the principal components, $T_r\bm{t}$ for all $\bm{t} \in \mathcal{T}$ and $L_3\bm{\ell}$ for all $\bm{\ell} \in \mathcal{P}$, define an $(r+3)$-dimensional domain over $\mathcal{T} \times \mathcal{P} \subset T_{[\tilde{X}_0]}\mathcal{G}(n,2) \times T_{P_0}S^2_{++}$. This parametrizes a section of the Grassmannian ($r$-submanifold) for all $\bm{t} \in \mathcal{T} \subset \mathbb{R}^r$ and $\Delta(\bm{t}) = \text{vec}^{-1}(T_r\bm{t})$,
\begin{equation} \label{eq:Gr_submanifold}
	\mathcal{A}_r = \left\lbrace [\tilde{X}] \in \mathcal{G}(n,2) \,:\, [\tilde{X}] = \text{Exp}_{[\tilde{X}_0]}(\Delta(\bm{t}))\right\rbrace,
\end{equation}
and the SPD matrix manifold for all $\bm{\ell} \in \mathcal{P}$,
\begin{equation} \label{eq:SPD_submanifold}
    \mathcal{S}_3 = \left\lbrace P \in S^2_{++} \,:\, P = \text{Exp}_{P_0}(\text{vec}^{-1}(L_3\bm{\ell}))\right\rbrace.
\end{equation}
The resulting parametrized Riemannian submanifold ${\mathcal{A}_r \times \mathcal{S}_3}$ becomes the product manifold of interest for generating rotation/reflection-invariant airfoils $\tilde{X}(\bm{t})P(\bm{\ell})$ according to \eqref{eq:polar_sep}. Moreover, when composed with subsequent rotations and spanwise reparametrizations, the data-driven representation \eqref{eq:polar_sep} can also be used to interpolate any blade or wing over paired piecewise geodesics across respective manifolds. This procedure is consistent with the previous interpolation procedure, but is now applied independently over respective manifolds as
\begin{equation} \label{eq:prod_blade}
	X(\eta) = (\tilde{X} \circ \varphi)(\eta)(P \circ \psi)(\eta)R(\eta)+1_{n,2}\text{diag}(\bm{b}(\eta))
\end{equation}
utilizing reparametrization over SPD distances as $\psi:\eta \mapsto \ell$ and inferred (implicit) or parametrized (explicit) rotations $R(\eta)$ with paired translations $\bm{b}(\eta)$.

Truncating the Grassmannian principal basis to the first $r=4$ components (based on the decay in PGA eigenvalues), we significantly reduce the number of parameters needed to define a rich set of airfoil deformations. Consequently, we have ``learned'' a four-dimensional data-driven submanifold of airfoil undulations, $\mathcal{A}_4$, which are independent of affine deformations. New parameters are now coordinates of this four-dimensional subspace $\bm{t} \in T_{\bm{0}}\mathcal{A}_4 \cong \mathbb{R}^4$ over the tangent space at the Karcher mean (analogous origin for $\mathcal{A}_r$). This is subsequently composed with right group actions defined by normal coordinates over $S^2_{++}$, fixed average length scales $\overline{M}$, explicit or inferred parametrizations $M(\bm{\ell})$, or some combination thereof to offer a complete representation of shapes.

The right plot of Fig.~\ref{fig:scatter} shows a 2D marginal slice of the airfoil data projected onto the four-dimensional PGA basis $\mathcal{T}$---i.e., a discrete distribution of $\bm{t} \in T_{[\tilde{X}_0]}\mathcal{A}_4$. Note that this design space roughly resembles a mixture of overlapping Gaussian distributions across the diverse family of airfoils. Compared to the CST representation, such a design space is significantly easier to infer or represent in the context of AI and ML algorithms. Further, extrapolation to shapes beyond the point cloud is significantly less volatile in this framework, offering an improved notion of regularized shape deformations.

To demonstrate the improved regularization of deformations, Fig.~\ref{fig:corner_sweeps} shows four random corner-to-corner sweeps (enveloping the data by bounding hyperrectangles) through CST and PGA spaces. The PGA perturbations are constructed with fixed average scales $\overline{M}$ defining right inverse $\tilde{X}(\bm{t})\overline{M}$ for $\bm{t} \in T_{[\tilde{X}_0]}\mathcal{A}_4$. In CST space, it is difficult to define a single design space that covers the range of airfoils under consideration while allowing for smooth deformations between them---i.e., named classes correspond to largely distinct subsets of perturbed CST coefficients, and linear interpolation across these classes results in nonphysical shapes with large undulations. Conversely, all shapes generated using the proposed Grassmannian methodology result in valid airfoil designs while creating a rich design space worthy of investigation. Moreover, this data-driven approach and regularization over \eqref{eq:Gr_submanifold} mitigates undesirable oscillations and undulations in the shape, as depicted in Fig.~\ref{fig:corner_sweeps}.

\subsection{Consistent Blade Deformations}
We emphasize a powerful approach to dimension reduction for parametrizing 3D shapes from 2D sequences of discrete shapes. Namely, we introduce the concept of \emph{consistent deformations} via parallel translation over the inferred PGA domain. The deformations retain the original notion of parameter direction over the PGA domain and result in a more intuitive deformation of blade shape---only requiring 7-10 parameters in total to define rich deformations to the entire 3D blade shape.

As previously noted, current approaches to 3D blade design require significant hand-tuning of airfoils to ensure the construction of valid blade geometries without dimples or kinks. Quantitatively, this can result from poor interpolation of shape characteristics, which deform shapes differently from one cross section to the next. Our proposed approach enables the flexible design of new blades by applying consistent deformations to all airfoils in addition to affine-invariant interpolation of shapes along the span. The undesirable variations from one cross section to the next are mitigated by parametrizing consistent perturbations enabled by parallel translation.

Blade perturbations are constructed from deformations to each of the given cross-sectional airfoils in consistent directions over $\bm{t} \in T_{[\tilde{X}_0]}\mathcal{A}_4$. Having defined perturbation directions in the tangent space of the Karcher mean $[\tilde{X}_0]$, we utilize parallel transport as an isometry to smoothly translate the perturbing vector field of the shape along separate geodesics---connecting the Karcher mean to each of the LA standardized airfoils $([\tilde{X}_k])$. More generally, this could be computed for $\mathcal{G}(n,2)$ and $S^2_{++}$ using Algorithms~\ref{alg:Gr_parallel} and \ref{alg:SPD_parallel}, respectively. The result is a set of consistent directions---as equal inner products and consequently equivalent normal coordinates in the central tangent space---over ordered tangent spaces $T_{[\tilde{X}_k]}\mathcal{G}(n,2)$, centered on each of the nominal $[\tilde{X}_k]$ defining the blade (similarly for corresponding SPD matrices). An example of a consistently perturbed sequence of cross-sectional airfoils is shown in Fig.~\ref{fig:interp_blade} (right), composed with average length scales $\overline{M}$ and $\Delta(\bm{t}) = \text{vec}^{-1}(T_r\bm{t})$ as
\begin{equation}
    (\pi^{-1}\circ [\tilde{X}])(\eta_k;\bm{t}, \overline{M}) =\text{Exp}_{[\tilde{X}_k]}\left(\tau_{\parallel}(\Delta(\bm{t});[\tilde{X}_0], [\tilde{X}_k]) \right)\overline{M}.
\end{equation}
These deformed shapes are then interpolated over $(\eta_k)$ to construct various blade deformations with any level of refinement as wire-frames. Consequently, undulations in the blade can be consistently parametrized and regularized by a shared set of four parameters $\bm{t} \in T_{[\tilde{X}_0]}\mathcal{A}_4$ in the tangent space of the Karcher mean using parallel transport with Algorithm~\ref{alg:Gr_parallel} (and/or Algorithm~\ref{alg:SPD_parallel} for a chosen separable representation).

Utilizing interpolation over consistent deformations with three to six independent affine parameters $\bm{\ell}$, chosen to be spanwise constant, as $\tilde{X}(\eta; \bm{t})M(\eta; \bm{\ell})$, constitutes a full set of \textit{$7-10$ parameters that describe a rich feature space of 3D blade perturbations}---a significant reduction compared to alternative expansion representations. Of course, the more general representation \eqref{eq:polar_sep} requires distinct parallel transport over the product manifold $\mathcal{A}_4 \times \mathcal{S}_3$ to offer additional flexibility with similar implications using only seven, or generally $r+3$, total parameters. Additional generalizations can be made at the expense of introducing additional parameters---perhaps incorporating controlled spanwise variations in $\bm{t}$. However, in this most simplified context, a rich set of deformations can still be defined with very few parameters. The result is a framework with a flexible means for designers to balance the total number of parameters to achieve sought blade deformations using a more explainable and interpretable representation---generated by an explainable and interpretable nonlinear manifold of shapes.

\section{Conclusions}

The benefits of coherent shape deformations, coupled with a natural framework for interpolating 2D airfoil shapes into 3D blades and the decoupling of affine and undulation-type deformations, make Grassmannian-based shape representation a powerful tool for enabling aerodynamic design. Moreover, the proposed transformations also enable the representation of rotation- and reflection-invariant shapes over a product submanifold for the purposes of 2D design.

Transforming discrete shapes into separable tensors enables 2D deformations and designs with evidence of improved regularization against nonphysical deformations. Moreover, the transformed representation offers more visually compelling evidence that Gaussian mixture models (a common prior in AI-aided design) may be a more relevant choice of prior distribution over the Karcher (Fr\'echet) centered domain of normal coordinates. Additionally, the ability to construct consistent deformations to blade shapes via parallel transport offers a novel and intuitive regularization to dramatically reduce the total number of parameters for 3D blade design.

We have shown, through theoretical arguments and numerical demonstration, that samples drawn from a class of relevant discrete airfoil shapes converge to Grassmannian elements defined as discrete refinements with fixed reparametrization. Moreover, we have motivated continuous analogues built from the corresponding discrete shapes for future extensions of this work. Lastly, in contrast to AI-based generative shape models, our methods offer a fast and extremely lightweight approach to shape-manifold learning.

\section*{Acknowledgements}
We would like to thank our collaborators: Ryan King, Ganesh Vijayakumar, Bum Seok Lee, and James Baeder for discussions about the design requirements and constraints in wind turbine blade design. We would also like to thank Andrew Dienstfrey for supporting discussions and providing constructive technical critiques emphasizing the strengths and weaknesses of the proposed methods.

This work was authored in part by the National Renewable Energy Laboratory, operated by Alliance for Sustainable Energy, LLC, for the U.S. Department of Energy (DOE) under Contract No. DE-AC36-08GO28308. Funding partially provided by the Advanced Research Projects Agency-Energy (ARPA-E) Design Intelligence Fostering Formidable Energy Reduction and Enabling Novel Totally Impactful Advanced Technology Enhancements (DIFFERENTIATE) program, award No. 19{\textbackslash}CJ000{\textbackslash}07{\textbackslash}03. The views expressed in the article do not necessarily represent the views of the DOE or the U.S. Government. This work is U.S. Government work and not protected by U.S. copyright. A portion of this research was performed using computational resources sponsored by the Department of Energy's Office of Energy Efficiency and Renewable Energy and located at the National Renewable Energy Laboratory. Certain commercial equipment, instruments, or materials are identified in this paper in order to specify hardware utilized in numerical experiments. Such identification is not intended to imply recommendation or endorsement by NIST/NREL, nor is it intended to imply that the materials or equipment identified are necessarily the best available for the purpose.

\bibliography{bibliography}
\end{document}